\newcommand\blfootnote[1]{%
  \begingroup
  \renewcommand\thefootnote{}\footnote{#1}%
  \addtocounter{footnote}{-1}%
  \endgroup
}
\newtheorem{lemma}{Lemma}
\begin{document}
\begin{center}
\LARGE
\textbf{On the Dynamics of Acceleration in First order Gradient Methods}\\[12pt]
\normalsize
\vspace{0.4cm}
\textbf {M Parimi,\footnote{Research Scholar, $E-MC^2$ Lab,Veermata Jijabai Technological Institute (VJTI), Mumbai, India} Rachit Mehra,\footnote{Project Lead, TenneT Offshore GmBH \{rachit.mehra\}@tennet.eu}, S.R. Wagh\footnote{Faculty- EED \& P.I. ($E-MC^2$), VJTI}, Amol Yerudkar, \footnote{Associate Professor, School of Computer Science and Technology, Zhejiang Normal University} and Navdeep Singh \footnote{IGI Research Chair Professor, $E-MC^2$ Lab, VJTI} \blfootnote{$E-MC^2$ Lab, VJTI, acknowledges International Gemological Institute (IGI) for financially supporting this research and Savex Technologies for establishing the lab and providing research facilities.}}\\[4pt]
\end{center}
\vspace{0.4cm}
\begin{abstract}
\normalsize
Ever since the original algorithm by Nesterov (1983), the true nature of the acceleration phenomenon has remained elusive, with various interpretations of why the method is actually faster. The diagnosis of the algorithm through the lens of Ordinary Differential Equations (ODEs) and the corresponding dynamical system formulation to explain the underlying dynamics has a long and rich history. In the literature, the ODE’s that explain algorithms are typically derived by considered the limiting case of the algorithm maps themselves, that is, an ODE formulation follows the development of an algorithm. This obfuscates the underlying higher order principles behind the generation of the ODE and thus provides little or no evidence of the working of the algorithm. Such has been the case with Nesterov’s algorithm (NA) and the various analogies used to describe the acceleration phenomena, viz, momentum associated with the rolling of a Heavy-Ball down a slope, damping / Hessian damping etc. This is why the working of the algorithm continues to be shrouded in mystery, even after over 40 years since it was first proposed.\\ 
\indent The main focus of our work is to ideate the genesis of the Nesterov algorithm from the viewpoint of  dynamical systems leading to demystifying the mathematical rigour behind the algorithm. Instead of reverse - engineering ODE’s from discrete algorithms (as is standard), this work explores tools from the recently developed control paradigm titled "The Passivity and Immersion (P\&I) approach" and the Geometric Singular Perturbation theory which are applied to arrive at the formulation of a dynamical system that explains and models the acceleration phenomena. This perspective facilitates to gain insights into the various terms present and the sequence of steps used in Nesterov’s accelerated algorithm for the smooth strongly convex and the convex case. The framework can also be extended to derive the acceleration achieved using the triple momentum method and  further provides justifications for the non-convergence to the optimal solution in the Heavy-Ball method.\vskip 2mm
\textbf{Keywords:} Nesterov's accelerated algorithm, Time scale mixing, Geometric singular perturbation theory, Fenichel's theorem, Invariant manifolds, Dynamical systems, Slow-fast systems, Normally attractive invariant manifold, Optimization, Passivity and Immersion approach
\end{abstract}

\section{Introduction and Motivation}
The challenges in obtaining global solutions to unconstrained optimization of large scale systems (machine learning, engineering optimization etc) involve high dimensionality, nonlinear objective functions, sparsity in data etc leading to ill-conditioned problems. These factors lead to slow convergence in gradient based solvers, prompting the use of adaptive optimizers, preconditioning and second order methods. Though second order methods (such as Newton's method) incorporate the Hessian information to address ill-conditioning, they suffer from high computational costs and memory requirements. An approach involving faster convergence using gradient descent with the addition of the so called "look ahead term" was proposed by Nesterov in his seminal work \cite{nesterov2013introductory}.\\
Nesterov’s algorithm (NA) attains accelerated convergence comparable to second-order methods, despite relying solely on first-order information (also called the Oracle). Efforts to unravel this phenomenon have been abundant in the literature, drawing on concepts such as Hessian damping, the Heavy-ball method, and other momentum-based techniques. Below are highlighted some of the reasons why NA is considered mysterious:
\begin{enumerate}
\item 
    Lack of a clear geometric intuition: Unlike standard gradient descent, which can be easily visualized as a ball rolling down a hill, Nesterov's method doesn't have a simple, direct physical analogy. The "Heavy-Ball" method, a related but distinct algorithm, has a more intuitive mechanical interpretation as a mass on a frictionless surface. NA, with its "look-ahead" step, is more difficult to visualize. The look-ahead step is one of the key differences from the heavy-ball method, as it calculates the gradient at a projected future position rather than the current one. This can feel unnatural and obscure.
    \item It's not a descent method: A key feature of standard gradient descent is that the objective function value decreases at every step. Nesterov's method, however, is not a monotonic descent method. The objective function value can oscillate and even increase for some iterations, before ultimately converging at a faster rate. This is contrary to the usual intuition that a successful optimization algorithm should always be making progress toward a lower function value.
    \item The proof is not enlightening: The original proof provides a solid algebraic guarantee of convergence but doesn't offer a deep understanding of the mechanism of acceleration. It's a "certificate of correctness" rather than a pedagogical explanation. 
    This has motivated decades of research in pursuit of a clearer and more intuitive explanation, which are reviewed in the next section.
\end{enumerate}
\subsection {Survey of Related Work}
Since Nesterov’s 1983 paper, researchers have sought to understand why acceleration is possible, aiming to move beyond the elegant yet so called ``tricky" algebraic manipulations of the original proof. In recent years, this quest for a deeper explanation has gained momentum, particularly driven by the machine learning community’s interest in generalizing Nesterov’s acceleration to broader application domains. However, such generalizations remain challenging, especially given that the fundamental mechanism behind the acceleration is still not fully understood. Recent works in the “why” direction are reinterpreting the NA from different point of view, and can broadly be categorized as follows:
\begin{enumerate}
\item Continuous-time perspective: The work of \cite{su2016differential} highlights the continuous-time ODE representation of the NA, with a second order ODE consisting of a momentum and a time-dependent friction, indicating a large damping ratio for a small $t$. With increase in $ t$, the decrease in the damping ratio causes the system from being overdamped to underdamped, leading to justifications for the oscillations observed in the algorithm before settling to the minimum. However, the genesis of the second order ODE is unclear. In \cite{shi2022understanding}, high-resolution ODEs that incorporate higher-order terms, specifically those of order ($ {O(\sqrt s )}$ , with $s$ indicating the step size) to capture the nuanced dynamics that drive acceleration, such as momentum adjustments and gradient corrections was developed. The paper also derives the ODEs which differentiates between the Heavy-Ball method, NA-SC and the NA-C cases but these derivations of the ODE’s follow from the respective algorithms and thus fail to capture the underlying first principles behind the phenomena of acceleration and faster convergence. \\
Second-order continuous-time dissipative dynamical systems with viscous and Hessian driven damping \cite{alvarez2002second} have inspired effective first-order algorithms for solving convex optimization problems. While preserving the fast convergence properties of the Nesterov-type acceleration, it has been speculated that the Hessian driven damping makes it possible to significantly attenuate the oscillations, by building on the analogy of the accelerated schemes with the rolling of a Heavy-Ball down a slope. Polyak motivated momentum methods \cite{polyak1964some} by an analogy to a “Heavy-Ball” moving in a potential well defined by the quadratic cost function and provided an eigenvalue argument that his Heavy-Ball Method required no more iterations than the method of conjugate gradients \cite{wilson2021lyapunov}. Despite its intuitive elegance, Polyak’s eigenvalue analysis does not apply globally for general convex cost functions. In fact, Lessard et al. \cite{lessard2016analysis} derived a simple one-dimensional counterexample where the standard Heavy-Ball Method does not converge. Moreover in recent works \cite{goujaud2025open} it has been shown that the Heavy-Ball method provably does not reach an accelerated convergence rate on smooth strongly convex problems. \\
A recent work \cite{ross2022derivation} purports to derive the NA from first principles founded on optimal control theory, wherein the optimization problem is posed as an optimal control problem whose trajectories satisfy the necessary conditions for optimal control and generate various continuous-time algorithms. The necessary conditions produce a controllable dynamical system for accelerated optimization and stabilizing this system via a quadratic control Lyapunov function generates an ODE.  It is claimed that an Euler discretization of the resulting ODE produces Nesterov’s algorithm but this claim is hard to verify. \\
Nesterov’s proof techniques \cite{nesterov2013introductory} to accelerate convex optimization methods departed from the physical intuition previously used by Polyak, instead employing the method of estimate sequences to validate the effectiveness of momentum-based methods. However, the theoretical foundations of estimate sequences have remained elusive, with many researchers viewing the associated proofs as little more than an “algebraic trick.”\\
Several authors have proposed schemes to achieve acceleration without appealing to the estimation sequence \cite{bubeck2015geometric}, \cite{drori2014performance}, \cite{drusvyatskiy2018optimal}, \cite{lessard2016analysis}. 
A particularly promising direction in the study of acceleration involves analyzing the continuous-time limits of accelerated methods \cite{krichene2015accelerated}, \cite{su2016differential}, or deriving these limiting ODEs directly from an underlying Lagrangian framework \cite{wibisono2016variational}, and then establishing their stability using Lyapunov functions.  Despite these advances, such approaches do not provide a general principle for translating a continuous-time ODE into a discrete-time optimization algorithm.

\item Geometric interpretation: Other work has focused on geometric interpretations, showing that Nesterov's method can be understood as a process that strategically combines an ordinary gradient descent step with a momentum step. These interpretations often frame the algorithm in a way that unifies the gradient descent with mirror descent \cite{allen2014novel} to explain the phenomena of acceleration and avoid the original, obscure algebraic proofs, provided in the seminal work of Nesterov.
\end{enumerate}
In summary, Nesterov's method is fundamentally algebraic, relying on a set of precise mathematical relationships to achieve its optimal convergence rate. The "mystery" arises from the fact that these relationships are not intuitively obvious, and the algorithm itself doesn't behave like a traditional descent method. 
\subsection{Key Intuitions}
The primary focus of this work is to address the above mentioned gap, i.e. to provide insights into the faster convergence of the NA as well as  articulating the principles and the thought process behind the development of the algorithm. The NA is a first order gradient solver to minimize a function $f(x)$ and yet achieves the faster convergence rates of second order methods. The governing gradient flow equation is extended to the second order by posing the problem as a controlled dynamical system in the framework of the Passivity and Immersion (P\&I) paradigm \cite{nayyer2022towards}. This framework provides the evolution of the dynamical system on two time scales: slow dynamics evolving with a rate of $1/L$ on $M_0$ and the fast dynamics evolving on a manifold which is Normally Hyperbolic to $M_0$ at a time scale $\mu$ ($<< L$). \\
Constructing a Normally Hyperbolic Invariant Manifold (NHIM) \cite{wiggins2013normally} requires determining the Ehresmann Connection (EC) between the desired invariant manifold $M_0$ and the fiber bundle structure. With reference to the connection, the tangent space of the dynamics is resolved into Horizontal and Vertical components $(V_H$ and $V_V)$ which are orthogonal to each other, details of which are discussed in Section \ref{sec:3}. Yet, effectively the system evolves based on the time scale of the slower dynamics. In order to accelerate the system, a small and suitable perturbation dynamics $M_p$ (which decays exponentially to the equilibrium $x^*$) is added to the NHIM structure. Borrowing ideas from Geometrical Singular Perturbation theory, according to Fenichel's theorem, the Normally Attractive Invariant Manifold (NAIM) $M_0$ is preserved under perturbation and results in an invariant manifold $M_\epsilon$ that is diffeomorphic to $M_0$.\\
The effect of the perturbation $ M_p$ on the system leads to a fast inner layer dynamics which decays exponentially, and a slow perturbed outer layer. The interaction between the two layers improves the convergence of the ensuing dynamics. In essence, \textbf{mixing of the time scales causes a reduction in the spectral gap, leading to faster convergence}. 
Of importance to be highlighted here is that the above discussion gives a perspective to decipher the celebrated "estimation sequence" used in \cite{nesterov2013introductory}. The time scale of the perturbation dynamics decides the convergence rates of the algorithm and has to be carefully chosen to achieve optimal convergence, which are highlighted in Section \ref{sec:4}. We show how different time scale mixing after adding the perturbation results in the various schemes of optimal methods of NA. A section on extensions of the proposed idea to comprehend convergence in the convex case and the triple momentum algorithm is enumerated in \ref{sec:d}. The mathematical preliminaries and notations used to understand the work are discussed in Section \ref{sec:2}, with conclusions described in Section \ref{sec:5}. The Appendix \ref{sec:a} enumerates the connections between the time scale and step size and the foundations on which the P\&I approach is developed.    
\subsection{Contributions}
\begin{itemize}
\item A generalized approach to encompass the dynamics of acceleration of first order gradient methods to solve large scale unconstrained optimization problems has been proposed. 
\item The approach relies on the construction of a NHIM manifold which is suitably perturbed. This perspective gives insights into both the continuous time as well as the algorithmic evolution of NA, for both the strongly convex and the convex case.
\item Central to the approach is the concept of perturbing the NHIM, resulting in inner and the outer layer dynamics, and their ensuing interactions leading to a faster convergence of NA. The persistence of the perturbed invariant manifold is ensured by Fenichel's theorem, a central result in Geometric Singular Perturbation Theory. 
\item Viewed from this standpoint, the approach facilitates re-deriving the so-called ‘algebraic trick,’ which ceases to appear mysterious. A step-by-step derivation of the algorithm, grounded in the ideas of time-scale mixing and the preceding discussions, naturally leads to the formulation of the two-time-step algorithm proposed by Nesterov.
\item The proposed work also brings in ramifications regarding why the NA is not monotonous, reasons for non convergence in the Heavy-Ball algorithm and also can be extended to explain the acceleration obtained in the triple momentum method.
\end{itemize}
\section{Preliminaries and Notations} \label{sec:2}
\begin{itemize}
\item $f(x)$: function to be minimized, $x \in R^n$
\item $f(x^*)$: function value at the minima $x^*$
\item $\mu$: strong convexity parameter (i.e. $f$ is $\mu$ strongly convex), $\alpha=\mu$
\item $L$ is Lipschitz constant of the gradient of $f$, $\beta=1/L$
\item $M_0$, $M_\epsilon$, $M_p$, $M$: invariant manifold, the perturbed invariant manifold, the added perturbation to $M_0$, characterized by coordinates not on $M_0$ (nor on the pathway)
\item $\mathcal{O}$: Big O operator
\item $\kappa$: condition number of $f$
\item $R$: semi-Riemann metric
\item $w$: connection term
\item $\perp$: perpendicular
\item $TM_0, T_X$: tangent bundle of $M_0$, tangent bundle of the base space X of the fiber bundle
\item $V_H$, $V_V$: horizontal and vertical sub-bundles which together span the tangent spaces of $M_0$.
\item S: Storage function
\item $u,y$: control input, passive output.
\item $x_k, y_k, v_k,\phi_k, \alpha_k, \lambda_k, \beta_k$: iterate variables of the algorithm corresponding to $M_0$, inner and outer dynamics, estimation sequence, $\sqrt(\mu/L)$, weighting factor between 0 and 1, coupling coefficient.
\item Strong Convexity: A twice differentiable function $f:R^n\rightarrow R$ is $\mu $-strongly convex with $\mu>0$ being the modulus of convexity, if its Hessian matrix $\nabla ^2f(x) $ satisfies the matrix inequality $\nabla^2f(x) >\mu I\hspace{0.1cm} \forall x \in R^n$. It measures how steep the function is at its minimum.
\item L-Lipschitz Continuity: A differentiable function $f(x)$ is smooth iff it has a Lipschitz continuous gradient,
i.e., iff $\exists L < \infty$ such that $||\nabla f(x_1)-\nabla f(x_2)||_2 \leq L||x_1-x_2||_2, \forall x \in R^n $. It measures how fast the gradient changes between two points.
\item Condition number: For a function $f(x)$ that is L-smooth and $\mu$ strongly convex, the condition number is defined as $\kappa=\frac{L}{\mu}$. A large $\kappa$ indicates that the function is ill-conditioned (curvature varies significantly in all directions) while a small $\kappa$ indicates a well conditioned function (spherical bowl). For convex functions ($\mu=0$), the condition number is not well defined and convergence is analyzed in terms of L.
\item Rate of convergence: Describes how quickly a system approaches an equilibrium point or an attractor. It is defined as the rate at which $f(x_k)-f(x^*)$ decreases as $k\rightarrow \infty$. For a strongly convex function $f(x)$, gradient descent algorithm achieves convergence of the order of $\mathcal{O}((1-1/\kappa)^k)$, whereas the Nesterov's algorithm NA achieves convergence of the order of $\mathcal{O}((1-1/\sqrt(\kappa))^k)$.
\item Normally Attractive Invariant Manifold (NAIM): In dynamical systems theory a NAIM (a special case of NHIM ) is an invariant manifold with only stable normal directions, with the rate of contraction in the normal directions being stronger than any possible contraction or expansion within the manifold itself.
\item Fenichel's theorem : Let $M_0$ be a compact NHIM (attractive) of a dynamical system defined by a vector field $f$. For a perturbed vector field $f_\epsilon$, where $\epsilon$ is small, there exists a unique invariant manifold $M_\epsilon$ close to $M_0$ which is diffeomorphic to $M_0$ and preserves the normal hyperbolic (attractive) property.
\item Time scale and Step size: Step size in gradient descent algorithm (discrete domain) directly corresponds to time scales of the ODE (continuous domain). A small step size gives a fine grained approximation of the continuous dynamics. The connections between the step size and time scale gives insight into many optimization phenomena, the details are explained in the Appendix.
\item Why a second order dynamical system?:
For a L-smooth, $\mu$-strongly convex function $f$, the gradient descent time scale $L$ is given by : $ \dot x(t) = \frac{-1}{L}\nabla f(x(t)) $. This first order dynamical system converges to x* (the global minimum of $f$) at an exponential rate :
\begin{align*}
    f(x(t)) - f(x^*)= \{f(x(0)) - f(x^*)\}e^{-\frac{\mu}{L}t}\\= \{f(x(0)) - f(x^*)\}e^{-\frac{t}{\kappa}}
\end{align*}
This indicates that larger the condition number, slower the convergence. The gradient descent achieves a convergence rate of: 
$\mathcal{O}((1-1/\kappa)^k)$. From a dynamical theory perspective the only way to improve the condition number (so as to speed up convergence) is to introduce an additional time scale $\mu$ and increase the order of the system. This forms the basis for mixing of time scales discussed in Section \ref{sec:3}, resulting in improvement of convergence rate (indirectly through $\kappa$) as suggested by Nesterov.
\end{itemize}
\section{Passivity and Immersion $(P\& I)$ method} \label{sec:3}
This section describes the $P\&I$ formulation to arrive at the continuous time counterpart of the NA. In essence, the $(P\& I)$ method \cite{nayyer2022towards} reframes the minimization of a function $f(x)$ into the stabilization of the manifold formed by the desired dynamics (the gradient flow). The stability of the manifold through the design procedure ensures that the objective function $f(x) $ attains a minimum. The details and genesis of the algorithm are explained in the Appendix.
The unconstrained optimization problem is defined as: 
\begin{align}\label{eq:21}
  \min_{x \in \mathbb{R}^n} f(x)  \\ f(x) \in  S^{1,1}_{\mu,L} \nonumber 
\end{align}
 The governing gradient flow
equation is extended to the second order by posing the problem as a controlled dynamical system:
 \begin{align}\label{eq:22}
 \begin{gathered}
  \dot x_{1} = x_{2} \\
  \dot x_{2} = u  
  \end{gathered}
 \end{align}
With $x \in (x_{1},x_{2}) \in R^{2}$, we consider a two dimensional state space for the ease of presentation and it can be extended to $x \in R^{2n}$.
 Since the gradient flow is the desired dynamics needed to solve the optimization problem in the given state space, choose a manifold $M_{0}$ (shown in Fig \ref{fig:fig1}) defined as:
 \begin{align}\label{eq:23}
 M_{0}  = \{(x_{1},x_{2})\mid x_{2} + \beta \nabla f(x_{1})=0\}
 \end{align}
 Note : From (\ref{eq:23}) one can infer that the dynamics on $M_0$ is of the form: $ \dot x_{1} = - \beta \nabla f(x_{1}) $
 \begin{lemma}
Given the form of $M_{0}$ as in (\ref{eq:23}),it can be proved that $M_{0}$ is an invariant manifold, invariant under the gradient flow $x_{t}$ defined by the equation:  $\dot x_{t} = -\beta  \nabla f(x(t)),\hspace{.1cm} \forall t \in R$ and $x(t) \in M_{0}$.
\end{lemma}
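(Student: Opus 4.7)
The plan is to verify invariance by showing that starting from any point on $M_0$, the integral curve of the stated gradient flow never leaves the zero level set of the defining constraint. I would first parametrize $M_0$ by its base coordinate: the graph map $x_1 \mapsto (x_1,\,-\beta\nabla f(x_1))$ is a diffeomorphism from $\mathbb{R}^n$ onto $M_0$, so every point of $M_0$ is uniquely described by $x_1$ with the companion coordinate $x_2 = -\beta\nabla f(x_1)$ fixed. This reduces checking invariance to a single-variable verification.

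Next I would fix an initial point $(x_1(0),x_2(0))\in M_0$, so that $x_2(0) = -\beta\nabla f(x_1(0))$. Since $\nabla f$ is $L$-Lipschitz (as $f\in S^{1,1}_{\mu,L}$), the Picard--Lindel\"of theorem guarantees a unique solution $x_1(t)$ of the ODE $\dot x_1 = -\beta\nabla f(x_1)$ with this initial datum. Defining $x_2(t) := -\beta\nabla f(x_1(t))$ tautologically places $(x_1(t),x_2(t))$ on $M_0$ for every $t$; it remains only to check that this candidate trajectory is consistent with the ambient second-order system (\ref{eq:22}). The first component equation $\dot x_1 = x_2$ is immediate from the definition of $x_2(t)$ and the gradient flow ODE. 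The second component equation $\dot x_2 = u$ determines the control $u$ that keeps the curve on $M_0$: differentiating $x_2(t) = -\beta\nabla f(x_1(t))$ yields $\dot x_2 = -\beta\,\nabla^2 f(x_1)\,\dot x_1 = -\beta\,\nabla^2 f(x_1)\,x_2$, so the required input on $M_0$ is $u = -\beta\,\nabla^2 f(x_1)\,x_2$.

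An equivalent, more direct way to phrase the same argument is to introduce the constraint function $h(x_1,x_2) := x_2 + \beta\nabla f(x_1)$ whose zero set is $M_0$, and show that $\dot h \equiv 0$ along trajectories of (\ref{eq:22}) once the induced control $u$ above is plugged in. Then $\dot h = \dot x_2 + \beta\nabla^2 f(x_1)\,\dot x_1 = u + \beta\nabla^2 f(x_1)\,x_2 = 0$, which is exactly the statement that $M_0$ is forward (and backward) invariant.

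I do not anticipate a genuine technical obstacle: the only subtlety is conceptual, namely recognizing that "invariance under the gradient flow" in this context means invariance of $M_0$ under the reduced dynamics obtained by restricting the second-order system to the constraint surface, rather than under some externally specified flow. Once the dynamics on $M_0$ are correctly identified as $\dot x_1 = -\beta\nabla f(x_1)$ via the graph parametrization, the verification is an elementary application of uniqueness of solutions and the chain rule.
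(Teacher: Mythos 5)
Your proof is correct and, in substance, is the same argument the paper relies on: the paper states this lemma without an explicit proof, but its surrounding text (the note that the on-manifold dynamics are $\dot x_1 = -\beta\nabla f(x_1)$, and the computation $\dot M = \dot x_2 + \beta\nabla^2 f(x_1)\dot x_1$ appearing in Remark I) is precisely your constraint-derivative check $\dot h \equiv 0$ together with the graph parametrization of $M_0$. Your additions (Picard--Lindel\"of for well-posedness under the $L$-Lipschitz gradient, and the explicit identification of the induced control $u = -\beta\nabla^2 f(x_1)\,x_2$ that keeps trajectories of (\ref{eq:22}) on $M_0$) simply make explicit what the paper leaves implicit; there is no substantive difference in approach.
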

 Thus, $M_{0}$ is an invariant manifold in the given state space that contains the equilibrium point $(x_{1}^{*},0)$ of the gradient flow.\\
 In general, consider a manifold $M$ which is obtained by integrating the dynamics on the manifold, i.e.
 $\int \dot M dt=\int (\dot x_2+b \nabla ^2 f(x_1) \dot x_1) dt$. This is an indefinite integral resulting in:
 $M=x_2+ \beta \nabla f(x_{1})+c=0$. When $c=0, M=M_0$, else $M$ represents coordinates not on $M_0$.\\
 We briefly discuss the steps involved in the P\&I approach involving the splitting of tangent space and the control law design: 
 \begin{enumerate}
     \item Define the target dynamics of the system (\ref{eq:22}) and hence the desired manifold $M_0$.
     \item Calculate the EC, $R$ and the connection term $\omega$ on the fiber bundle of (\ref{eq:22}).
     \item Decompose the vector field (\ref{eq:22}) into $V_H$ and $V_V$ with respect to $\omega$, with the desired dynamics along $V_H$.
     \item Design the control law $u$ (lying along $V_V$) so as to move the off-manifold trajectories towards the manifold $M_0$.
 \end{enumerate}
 Detailed explanations of these steps appear in the following sections.
\begin{figure}
    \centering
    \includegraphics[width=0.5\linewidth]{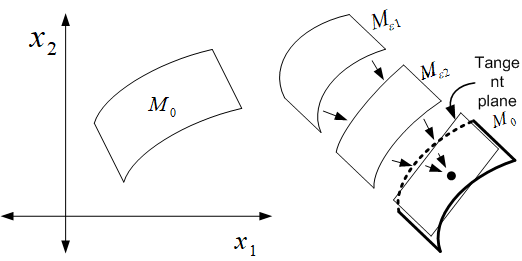}
    \caption{Invariant manifold $M_0$ and the effect of perturbation on $M_0$}
    \label{fig:fig1}
\end{figure}

 \subsection{Splitting of the Tangent space}
 \begin{itemize}
     \item Given $M_{0}$, the normal to $M_{o}$ is calculated as $M_{o}^{\perp} = [\beta  \nabla^{2}f(x_{1} ) \hspace{.2cm} 1]$
     \item The semi-Riemann metric $R$ and the connection $\omega$ is derived as:\\
     $R=(M_{o}^{\perp})^T M_{o}^{\perp}$
     \item 
    $$ R =\begin{bmatrix}
     (\beta \nabla^{2}f(x_{1}))^{2} & \beta \nabla^{2}f(x_{1})\\
     \beta \nabla^{2}f(x_{1}) & 1
 \end{bmatrix}
 =\begin{bmatrix}
     M_{11} & M_{12}\\M_{21} & M_{22}
 \end{bmatrix}$$
 \item The connection term $\omega = \frac{M_{21}}{M_{22}} = \beta \nabla^{2}f(x_{1})$\
 \item For the given EC, $\omega$, the tangent space of $X$ splits as a direct sum of the horizontal space $V_H$ and vertical space $V_V$ as shown below:
    $$(\dot x_1, \dot x_2) = V_H \oplus V_V $$ =
    $(\dot x_1,-\beta \nabla^2 f(x_1){\dot x_1}) \oplus (0,\dot x_2 +  \beta \nabla^2f(x_1)\dot x_1)$
     \end{itemize}
\begin{lemma}
$V_H$ is tangential to $M_0$
\end{lemma}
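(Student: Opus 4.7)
The plan is to verify directly that at every point $(x_1,x_2)\in M_0$, the horizontal component $V_H$ belongs to the tangent space $T_{(x_1,x_2)}M_0$. Because $M_0$ has been given as a level set, this reduces to a single algebraic check, and the construction of the connection $\omega$ was in fact engineered to make that check go through.

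First I would view $M_0$ as the zero level set of the smooth map
\begin{equation*}
g(x_1,x_2) \;=\; x_2 + \beta\,\nabla f(x_1).
\end{equation*}
Its differential is $dg = [\,\beta\,\nabla^{2}f(x_1),\; 1\,]$, which never vanishes (the $x_2$-component is identically $1$). Hence $M_0$ is a regular codimension-one submanifold and its tangent space at any point is exactly $\ker(dg)$: a vector $(v_1,v_2)$ is tangent to $M_0$ if and only if
\begin{equation*}
\beta\,\nabla^{2}f(x_1)\,v_1 \;+\; v_2 \;=\; 0,
\qquad\text{i.e.}\qquad v_2 \;=\; -\beta\,\nabla^{2}f(x_1)\,v_1.
\end{equation*}

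Next I would simply substitute. From the splitting displayed just before the lemma, the horizontal piece is
\begin{equation*}
V_H \;=\; \bigl(\dot x_1,\; -\beta\,\nabla^{2}f(x_1)\,\dot x_1\bigr).
\end{equation*}
Taking $v_1=\dot x_1$ and $v_2=-\beta\,\nabla^{2}f(x_1)\,\dot x_1$, the tangency condition $v_2 = -\beta\,\nabla^{2}f(x_1)\,v_1$ holds identically, so $V_H\in T_{(x_1,x_2)}M_0$ everywhere on $M_0$. Equivalently, one could compute $\tfrac{d}{dt}g(x_1,x_2)$ along a curve whose velocity is $V_H$ and see it vanishes; both viewpoints give the same identity.

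I do not anticipate a real obstacle here: the connection term $\omega = \beta\,\nabla^{2}f(x_1)$ was chosen precisely so that the image of the horizontal lift lies in $\ker(dg)$, so the lemma is in effect a consistency statement about the splitting. The only point worth flagging is the regularity of $M_0$, which is immediate from the fact that the coefficient of $\partial/\partial x_2$ in $dg$ is constant and nonzero; without that, one could not identify $T M_0$ with $\ker(dg)$ globally.
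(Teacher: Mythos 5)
Your proof is correct and follows essentially the same route as the paper: the differential $dg=[\,\beta\nabla^{2}f(x_1)\;\;1\,]$ you compute is exactly the normal covector $M_0^{\perp}$ used in the paper, and checking $V_H\in\ker(dg)$ is the same substitution $M_0^{\perp}\cdot V_H=0$. Your added remark on the regularity of the level set (the $x_2$-component of $dg$ being identically $1$) is a small but welcome extra justification that the paper leaves implicit.
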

\begin{proof}
$M_{o}^{\perp}$ is defined as 
    $ \begin{bmatrix}
        \beta \nabla^2f(x_1) & 1 
    \end{bmatrix}$  \\
    Along $V_H$, $\dot x_2 = -\beta \nabla ^2 f(x_{1})\dot x_1$\\
    $\therefore M_0^{\perp}.V_H = \begin{bmatrix}
     -\beta \nabla ^2 f(x_{1}) & 1  
    \end{bmatrix}
    \begin{bmatrix}
        \dot x_1\\
         -\beta \nabla ^2 f(x_{1})\dot x_1
    \end{bmatrix}=0$
        \end{proof}
Now to ensure that $M_0$ is attractive i.e. all off manifold trajectories in $X$ converge to $M_0$ exponentially a control law $ u$ is derived as follows.     
\subsection{Finding the Control law}
 \begin{itemize}
\item   Define $S = \frac{1}{2}M^2$ where $M =  {x_2  +\beta \nabla f(x_1) \neq 0}$. The definition of $M$ implies that the system trajectories are not on $M_0$.
\item Now, $u$ is chosen so that $\dot S \leq -\hat \alpha S$, i.e 
\begin{align}\label{eq:28}
    M\dot M\leq-\hat \alpha/2 M^2\\
    \dot M \leq \frac{-\hat \alpha}{2} M \nonumber
\end{align}
One has :   $u + \beta  \nabla^2 f(x_1)\dot x_1 \leq \frac{-\hat \alpha M}{2}$ \vspace{.1cm} \\
$u = -\beta \nabla^2f(x_1)\dot x_1 - \alpha M \hspace{0.5cm}$ (choosing $\alpha=\frac{\hat \alpha}{2})$
   \item  Substituting $u$ in (\ref{eq:22})
   \begin{align}
       \dot x_1 = x_2 \nonumber \\
     \dot x_2 = -\beta \nabla ^2 f(x_{1})\dot x_1 - \alpha(x_2 + \beta\nabla f(x_1))\nonumber  \\
    =-\beta \nabla ^2 f(x_{1})\dot x_1-\alpha x_2-\beta\alpha \nabla f(x_1) \nonumber
   \end{align}
    \end{itemize}
\textbf{Remark I}: Since $M = \{(x_1,x_2) | x_2 + \beta \nabla f(x_{1} \neq 0\})$, $\dot M = V_V =     (0,\dot x_2 + \frac{1}{L} \nabla ^2 f(x_{1})\dot x_1)$ and $\dot M \leq -\alpha M \implies \dot V_V \leq -\alpha V_V $. Thus, the controlled trajectories are along the $V_V$ direction and thus transverse (orthogonal) to the manifold $M_0$. 
\vspace{.1cm}\\
\textbf{Remark II}: Given $M_0$, $M$ can be viewed as the manifold $ M_\zeta = \{(x_1,x_2) | x_2 +\beta \nabla f(x_{1}) = \zeta \}$. 
Under the control action along the $V_V$ direction, as $(\zeta \rightarrow 0), M_\zeta \rightarrow M_0 $, as shown in Fig \ref{fig:fig1}. \vspace{.1cm} \\
Thus under the action of the control $u$ in (\ref{eq:22}) the manifold $M_0$ becomes attractive and any off-manifold trajectory converges exponentially to the equilibrium $(x_1^*,0)$ contained in the invariant manifold $M_0$. \\
Thus for the dynamical system $(\mu >> \frac{1}{L})$:
\begin{align}\label{eq:25}
\begin{gathered}
  \dot x_1=x_2\hspace{.2cm}   \\
    \dot x_2 = \beta \nabla ^2 f(x_1)\dot x_1 - \alpha x_2 - \beta \alpha  \nabla f(x_1)
    \end{gathered}
\end{align}
The manifold $M_0$ is a NAIM, with the rate of contraction along $M_0$ being $\beta$ and the rate of contraction in a direction orthogonal to $M_0$ being $\alpha$.\\
If one designates $x_1=x$ then (\ref{eq:25}) can be written as :
\begin{align}\label{eq:26}
 \ddot x + \beta \nabla^2 f(x_1)\dot x +  \alpha \beta \nabla f(x) = 0   
\end{align}
\section{Addition of Perturbation dynamics and Time scale mixing}\label{sec:4}
 The NAIM $M_0$ is characterized by the existence of a large spectral gap between the slow dynamics on $M_0$ and the fast transverse dynamics, implying that the time scale of the slow dynamics ($\frac{1}{L}$) would govern the rate of convergence. For accelerated convergence, it is essential to bridge the spectral gap, achieved by employing time-scale mixing, which refers to scenarios wherein the slow-fast separation breaks down (as in when fast and slow scales interact due to parameter values \cite{eilertsen2020quasi}).  Time scale mixing is achieved through the addition of a suitable (decaying) perturbation  affecting the NAIM structure. However, the NAIM structure of $M_0$ is preserved during the perturbation, due to the applicability of the Geometric Singular Perturbation Theory (GSPT) \cite{fenichel1979geometric} and Fenichel's theorem. \vspace{.1cm} \\
 What constitutes as a perturbation of $M_0$ - any input to the present framework which does not evolve along the pathway (defined by the connection). Hence a suitably defined  manifold $M_p$ is added to the NAIM structure defined by(\ref{eq:25}), and the resultant system dynamics follows from (\ref{eq:28}),  i.e\\
 \begin{equation}\label{eq:27}
     \dot M \leq -\alpha(M + M_p)
 \end{equation}
    The manifold $M_p$ is of the form $M_p= \{(x_1,x_2)\mid \alpha x_2 + \nabla f(x_1) = \eta \}, \eta\rightarrow 0$, implying that $M_p$ evolves with time to the tangent space of $M_0$, i.e $M_p \rightarrow TM_{0x^*}; x^*=(x_1^*,0)$.\\
    Note: From the above it is clear that $M_p$ is a decaying manifold. With the addition of $M_p$, equation (\ref{eq:26}) is modified as:
    \begin{align}\label{eq:28a}
        \ddot x + \beta \nabla^2 f(x_1)\dot x + \alpha [(\dot x +\beta \nabla f(x))+(\dot x +\frac{1}{\alpha} \nabla f(x))] = 0
    \end{align}
 
    \textbf{Remark III:} Equation (\ref{eq:25}) can be viewed as a linear system defined as:
    $$ \dot x_1=x_2$$
    $$ \dot x_2=u_1$$ where $$ u_1= -\beta \nabla^2 f(x_1)x_2 - \alpha x_2 -  \alpha \beta \nabla f(x_1)$$
    The equilibrium point is given as $(x_1^*,0)$ as discussed. The addition of the perturbation can be viewed as the addition of an input $u_2$ to the above equation leading to:
    $$ \dot x_1=x_2$$
    $$ \dot x_2=u_1+u_2$$ where $$ u_2= -\alpha x_2- \nabla f(x_1) $$ leading to (\ref{eq:28a}).
Since the above system is linear, to understand the effect of $u_2$, one takes $u_1=0$. This leads to:
$$\dot x_1=x_2 $$
$$\dot x_2=u_2=-\alpha x_2-\nabla f(x_1) $$   

So as to study the effect of $u_2$, put $\dot x_1=\dot x_2=0$ entailing $x_2=\dot x_1=-(1/\alpha) \nabla f(x_1) $. Since $f(x_1)$ is a smoothly convex function: $\dot x_1=-(1/\alpha) \nabla f(x_1) \implies x_1 \rightarrow x_1^*$. Thus the equilibrium point $(x_1^*,0)$ given by the condition $x_1^*=x_2^*=0$ is asymptotically stable, and $u_2$ qualifies to be a decaying perturbation. A more formal proof can be found in \cite{gunjal2023new}.
\subsection{Transient Time scale mixing}
Having defined what constitutes a perturbation $M_p$, we now investigate its effect on $M_0$. The manifold $M_0$ is characterized by the slow-fast NAIM structure. Due to the perturbation $M_p$, the spectral separation is temporarily lost, causing the fast normal attraction to slow down and interfere with the slow motion along $M_0$. This occurs when the perturbation decay rate aligns with one of the system's inherent time scales, thereby blurring the slow-fast distinction leading to transient non-hyperbolic behaviour of the system dynamics. Importantly, while the mixing phenomena is transient, resolving as the perturbation decays to zero and asymptotic separation is restored, the NAIM may experience temporary loss of attractiveness or smoothness during this phase. This manifests in the emergence of an intermediate time scale, which one refers to as the process of time scale mixing. As the perturbation decays, the system reverts asymptotically to the unperturbed- slow fast dynamics, restoring the NAIMs attractiveness. This transient mixing is a tunable phenomena offering insights into various real world systems with varying disturbances, particularly in the field of Biology, Neuroscience, and Chemistry, to name a few.\\
\textbf{Remark IV:} In the above process, the structure of NAIM and Fenichel's theorem play a central role, else, even for a decaying perturbation, the invariant manifold may fail to recover and probably degenerate to an invariant set (a probable reason behind the non-convergence and appearance of cycles for certain regimes in case of the Heavy-Ball dynamics \cite{goujaud2025open}.
\subsection{The structure of the perturbed NAIM}
To describe a perturbed NAIM, one often uses the concept of inner and outer layer dynamics, to approximate the behaviour of the system dynamics near the perturbed manifold. 
In the context of dynamical systems, a perturbed normally invariant manifold $M_\epsilon$ consists of :
\begin{itemize}
    \item Inner layers : characterized by stable dynamics perpendicular to the manifold surface. They form the basin of attraction surrounding the manifold and exponentially converge towards the manifold.
    \item Outer layers: comprises of the invariant manifold itself and is characterized by internal dynamics that persist after perturbations.  When perturbed, NAIM maintains structural stability. Small perturbations preserve the manifold’s existence and basic topological properties although its exact position and internal dynamics may slightly shift.
    \item As shown in Fig \ref{fig:f2}, the outer layer is tangential and diffeomorphic to $M_0$ while the inner layer converges exponentially.
\end{itemize}
\begin{figure}[h!]
    \centering
    \includegraphics[scale=0.2]{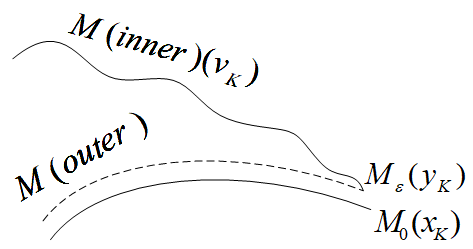}
    \caption{The inner and outer layer manifolds after perturbation with the associated variables}
    \label{fig:f2}
\end{figure}
Therefore, in NAIM, the inner layer exhibits faster dynamics and the outer layer exhibits slower dynamics constrained to the manifold surface. With the mixing of time scales of these dynamics, various resultant rates can be achieved. The geometric mean arises when the multiplicative interactions dominate and the effective time scales often combine multiplicatively rather than additively. 
Owing to the mixing, the traditional separation between inner layer and the  outer layer breaks down and the convergence rates reach similar values. Therefore, the inner and outer layer dynamics of a perturbed NAIM can indeed operate at the same rate, having a time scale that is proportional to geometric mean $\sqrt{\alpha/\beta}$. This rate combines multiplicatively with the slow layer $\frac{1}{\alpha}$ resulting in an effective time scale $\sqrt{\frac{1}{\alpha \beta}}$ for the overall dynamics.
To sum up, the equivalent gradient flow dynamics of the combined layer (with rate $\sqrt \frac{\beta}{\alpha}$) is given as: $$ \dot x = -{\sqrt \frac{\beta}{\alpha} }\nabla f(x) $$
Given that f(x) is $\mu (or \hspace{.1cm}\alpha)$-strongly convex and $L$ Lipschitz continuous, the effective rate of convergence is proportional to $\alpha{\sqrt{\beta/\alpha }} = \sqrt{\alpha \beta} = \sqrt{\frac{\mu}{L}} =\frac{1}{\sqrt \kappa}$ as desired. \\
Finally to reflect the time scale mixing, the parameters of (\ref{eq:28a}) are modified as: $\beta \rightarrow \sqrt \beta= \sqrt{\frac{1}{L}}, \alpha \rightarrow \sqrt \alpha= \sqrt{\mu} $. 
This leads to the final form:
\begin{align}
    \ddot x + \sqrt{\frac{1}{L}} \nabla^2 f(x_1)\dot x + 2\sqrt{\mu}\dot x + (1+\sqrt{\frac{\mu}{L}})\nabla f(x) = 0 \nonumber
\end{align}
The form of the above equation matches exactly with the high resolution ODE \cite{shi2022understanding} for L-smooth and $\mu$-strongly convex functions.
Based on the ideas obtained from the above analysis (the role of perturbation, inner and outer layer dynamics, geometric time scale mixing), section \ref{sec:4} is developed to interpret and derive the sequence estimation method of Nesterov, assuming that the time scale for the overall dynamics is $\sqrt{\frac{\mu}{L}}$. 
\subsection{Issues with Heavy-Ball dynamics}
This section provides a strong justification as to why the Heavy-Ball dynamics oscillate \cite{polyak1964some} and fail to converge in certain cases for strongly convex functions.
\begin{itemize}
\item As discussed in the previous section, the dynamics of acceleration in the Nesterov's flow is due to the perturbation of the NHIM and the resultant time scale mixing.
    \item The above is contingent on the structural stability of '$M_0$', (the invariant manifold $M_0$ (NAIM)) and this is ensured by the application of Fenichel's theorem.
    \item The NAIM framework plays a central role in the development of the above results because of the path defined in the state space '$X$ by the connection term, i.e. \textbf{the Hessian of $f$}.
    \item The tangent bundle '$T_X$' is decomposed using the direct sum as:
    \begin{equation}
        T_X = V_H \oplus V_V = T {M_0} \oplus V_V
    \end{equation}
    \item This decomposition results in the NAIM framework that ensures the structural stability of $M_0$.
    \item The Heavy-Ball method mimics the Nesterov's flow but \textbf{without the Hessian term}. Thus, when perturbation is introduced, the manifold $M_0$ can easily loose its properties (both topological and dynamical) and the resultant dynamics after the addition of the perturbation may cause the system to be non-convergent to the optimal $x^*$ \cite{goujaud2023provable} .
\end{itemize}
\section{Nesterov's Algorithm decoded}\label{sec:4}
The major contribution of this section is to explain the physical intuition behind the NA \cite{nesterov2013introductory} which, in literature, has been deemed to be an "algebraic trick". With supporting rationale from the P \&I  approach and time scale mixing, described in the earlier sections, the idea behind the Nesterov's estimation sequence and the ensuing expressions that lead to the celebrated NA are decoded. The parallels between the proposed method (Section \ref{sec:3} and \ref{sec:4}) and the NA is discussed in this section. \\
\subsection{Background}
Given a function $f \epsilon S_{\mu,L}^{1,1}(R^n)$, to be minimized, effectively implying that the solution to the gradient flow dynamics has to be found. Formulating this as a control problem:
$$\dot x_1=x_2$$
$$\dot x_2=u$$ and applying the P\&I approach results in splitting the tangent space of the desired dynamics into $V_H$ and $V_V$, leading to a NHIM structure. However, the large time scale separation between $V_H$ and $V_V$ eventually causes the second order system to behave as a first order system. Adding a perturbation $M_p$ (whose dynamics are not along $V_V$) causes a perturbation of $M_0$, thereby reducing the spectral gap between $V_H$ and $V_V$. The dynamics of the perturbed manifold $ M\epsilon$ can be analyzed by resolving the effect of the added perturbation into inner and outer layer dynamics, with the latter lying tangential to $ M\epsilon$ and slowly decaying, finally resulting in $M_{\epsilon} \rightarrow TM_{0}\mid(x_{1}^{*},0)$ as $k\rightarrow \infty$.\\
Reviewing the above discussion:
 \begin{itemize}
     \item $M_{0}$ - NHIM defined as $M_{0} = \{(x_{1},x_{2})\epsilon R^{2n}|x_{2} + \frac{1}{L}\nabla f(x_{1}) = 0\}$
     \item $M_{P}$ - defines the perturbation of $M_{0} : M_{p} = \{(x_{1},x_{2})\epsilon R^{2n}|x_{2} + \frac{1}{\mu}\nabla f(x_{1})\}$
     \item $M_{\epsilon}$ - perturbed manifold characterized by inner and outer layer dynamics. Note that $M_{\epsilon}$ is diffeomorphic to $M_{0}$.
 \end{itemize}
Based on the perspective gained on the dynamics of acceleration for the gradient flow, an iterative process is designed to mimic the behaviour of the accelerated gradient flow achieved through the process of time scale mixing as explained.\\
In the iterative process at each instant $k$ we assign coordinates/variables as follows :
\begin{itemize}
    \item $v_{k} (\in R^n) \rightarrow$ iterate variable representing inner dynamics
\item $y_{k}(\in R^n) \rightarrow$ iterate variable representing outer dynamics
\item $x_{k}(\in R^n) \rightarrow $ manifold $M_{0}/M_{\epsilon}$ which is diffeomorphic to $M_{\epsilon}/M_0$
\end{itemize}
Also, $M_{\epsilon} \rightarrow TM_{0}|x_{1}^{*}$ as $k\rightarrow \infty$ and $v_{k} \rightarrow x_{k}$. 
The succeeding section highlights the algorithmic viewpoint of the above discussion and the evolution of the estimation sequence given by Nesterov.
\subsection{NA- decoded}

We now interpret Nesterov's accelerated convergence algorithm for strongly convex functions (NA-SC). \vspace{.2cm} \\
A pair of sequences $\{ \phi_{k}(x)\}_{k=0}^\infty$ and $\{\lambda_k\}_{k=0}^\infty : k \geq 0$, 
called the estimation sequence is defined (as shown in Fig. 3) i.e. 
\begin{align}\nonumber
 \phi_{k}(x) \leq (1-\lambda_k)f(x) + \lambda_{k}\phi_{0}(x)  
\end{align}
The sequence $\phi_{k}(x)$ models the evolution of $M_{\epsilon}$, so if for some sequence $\{x_{k}\}$: 
\begin{align} \nonumber
    f(x_{k}) \leq \phi_{k} \equiv min_{x\epsilon R^n} \phi_{k}(x)
\end{align}
then $f(x_k) \rightarrow f^*$.
\begin{figure}[h]
        \centering
        \includegraphics[scale=0.2]{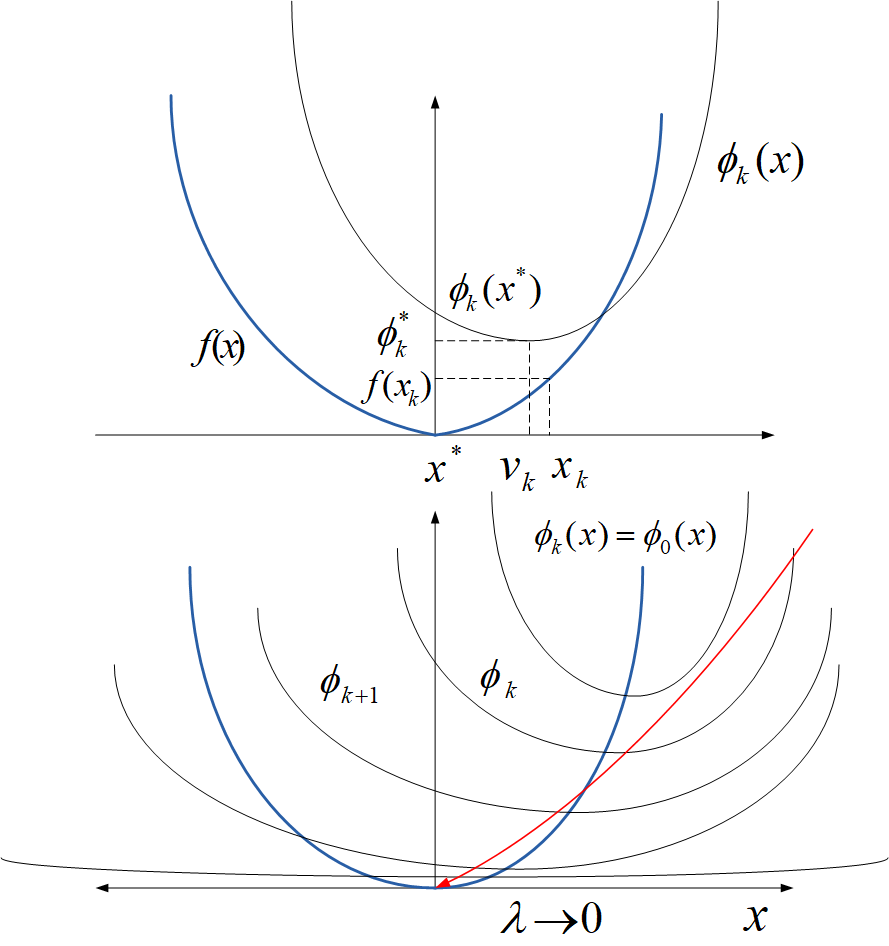}
        \caption{Estimation sequence $\phi_k(x)$ and its evolution}
        \label{fig:f1}
        \end{figure}
Let $\phi_{0}(x)$ be an arbitrary function of $R^{n}$. To tie up the idea of time scales and the associated weights into the iterative algorithm, a sequence $\{\alpha_k\}_{k=1}^{\infty}$ is introduced where : \\
$\alpha_k \epsilon (0,1) : \sum_{k=0} ^ \infty\ \alpha_k = \infty $ along with $\lambda _0 = 1$.\\
(In the case considered in this paper $\alpha_k = \alpha = \sqrt{\frac{\mu}{L}} \hspace{.2cm}\forall k$), then with $\lambda_{k+1} = (1-\alpha_k)\lambda_k$ one has:\\
\begin{align}\label{eq:a33}
\phi_{k+1}(x)=\underbrace{ (1-\alpha_k)\phi_k(x)}_I+
\underbrace{\alpha_k[f(y_k)+\langle f'(y_k),(x-y_k) \rangle+\frac{\mu}{2} \hspace{0.2cm}||x-y_k||^2]}_{II}
\end{align}
\textbf{Explanation:} The term $I$ defines the autonomous dynamics of $\phi_k$. Since $\alpha_k <1$ it implies that the dynamics is stable and that $\phi_k(x) \rightarrow TM_{O}|x^*$ as $k \rightarrow \infty$.\\
The term II acts as an input, representing the Taylor's series expansion of $f$ at $y_k$ (ties up the coordinate $x_{k}$ of $M_{0}$ with the outer layer coordinates $y_k$), where the outer layer is tangent to $M_0$. $\alpha_k$ gives the weights associated with the inner layer and the outer layer respectively as discussed. \\
\textbf{Note}: We take $\Gamma_k = \mu \hspace{.2cm}\forall k$.\\
The dynamics of $\phi_k(x)$ as defined in (\ref{eq:a33}) preserves the canonical form of the function defined in following equation
\begin{align}\nonumber
  \phi_{k}(x) = \phi_{k} ^ * + \frac{\mu}{2}||x-v_k||^2 \hspace{.1cm}\forall k : \Gamma_o = \Gamma_k=\mu 
\end{align}
Since $M_{\epsilon} \rightarrow TM_{0}|x$ as $k \rightarrow \infty$, the term $||x-v_k||^2$ ties up the inner layer dynamics (coordinate $v_k$) with the dynamics on $M_{0}$ (coordinate $x_{k}$).\\
This is proved by the following inductive-deductive argument:\\ \vspace{.2cm}
Assume $\phi_k^{\prime\prime}(x) = \mu I_n$ then, \\ $\phi_{k+1}^{\prime\prime}(x) = (1-\alpha)\phi_k^{\prime\prime}(x) + \alpha \mu I_n (from (\ref{eq:a33}))$\\
$\therefore \phi_{k+1}^{\prime\prime}(x) = \mu I_n$\\
Hence, $\phi_{k+1}(x) = \phi_{k+1}^* + \frac{\mu}{2}||x-v_{k+1}||^2 $\\
The first order optimality condition $\phi_{k+1}^{'}(x) = 0$ (when the inner layer tends to $M_{\epsilon}$ ) gives the iterate defining the evolution of $v_k$ i.e. 
\begin{align}
    v_{k+1} = \frac{1}{\mu}[(1-\alpha)\mu v_k + \alpha \mu y_k - \alpha f^{'}(y_k)]\nonumber \\
   = (1-\alpha)v_k + \alpha y_k - \frac{\alpha}{\mu}f^{'}(y_k) \nonumber
\end{align}
The autonomous dynamics of $v_k$ i.e. $v_{k+1} = (1-\alpha)v_{k} $ is stable, since $\alpha = \sqrt{\frac{\mu}{L}} < 1$\\
The forcing terms are as follows:
\begin{align}\nonumber
    \sqrt{\frac{\mu}{L}}y_k \hspace{.2cm} and \frac{1}{\sqrt{\mu L}} \nabla f(y_k)
\end{align}
The outer layer coordinate $y_k$ is weighted by $\sqrt{\frac{\mu}{L}}$ and the term $f^ \prime (y_k)$ by $\frac{1}{\sqrt{\mu L}}$ which agrees with the conclusions drawn from Section 4.2. \\
Finally, one sets up the computation of $\phi_{k+1}^{*}$ followed by deriving the conditions that ensure that 
$\phi_{k+1}^{*}(x) \geq f(x_{k+1})$ assuming $\phi_{k}^{*}(x) \geq f(x_{k})$\\
The conditions under which $\phi_{k+1}^{*}(x) \geq f(x_{k+1})$ result in the NA as shown below: \\
\underline{One starts by computing $\phi_{k+1}^{*}(x)$}
\begin{align}\label{eq:a36}
    \phi_{k+1}(y_k) = \phi_{k+1}^{*} + \frac{\mu}{2}||y_k - v_{k+1}||^2
\end{align}
Also from (\ref{eq:a33}) one has: 
\begin{align}\label{eq:a37}
    \phi_{k+1}(y_k) = (1-\alpha)[\phi_k^* + \frac{\mu}{2}||y_k -v_k||^2) + \alpha_kf(y_k)]
\end{align}

From (\ref{eq:a36}) and (\ref{eq:a37}) one has:
\begin{align}\label{eq:a38}
    \phi_{k+1}^{*} + \frac{\mu}{2}||y_k - v_{k+1}||^2 = (1-\alpha)[\phi_k^* + \frac{\mu}{2}||y_k-v_k||^2) + \alpha_k(f(y_k)]
\end{align}

On replacing $||y_k - v_{k+1}||$ in terms of $||y_k - v_{k}||^2$ and other terms using (\ref{eq:a38}), one gets from (\ref{eq:a36}, \ref{eq:a37},\ref{eq:a38}) and the assumption that $\phi_k^* \geq f(x_k)$ the following inequality:
\begin{align}\label{eq:a39}
    \phi_{k+1}^* \geq (1-\alpha)f(x_k) + \alpha f(y_k) - \frac{\alpha^2}{2\mu}||f^{\prime}(y_k)||^2 + \nonumber \\ \frac{\alpha(1-\alpha)\mu}{\mu}\langle f^{\prime}(y_k),v_k-y_k\rangle
\end{align}
Since $f(x_k) \geq f(y_k) + \langle f^{'}(y_k), x_k - y_k\rangle $ along with the equality:
\begin{align}\nonumber
    x_{k+1} = y_k - \frac{1}{L}f^{\prime}(y_k)
\end{align}
One has from (\ref{eq:a39}) the following inequality:
\begin{align}\nonumber
    \phi_{k+1}^{*} \geq (1-\alpha)f(x_{k+1}) + (1-\alpha)\langle f^{'}(y_k), \alpha(v_k - y_k) + x_k - y_k \rangle
\end{align}
Setting $\alpha((v_k-y_k) + x_k -y_k) = 0$ one has: 
$\phi_{k+1}^* \geq f(x_{k+1})$ subject to:
\begin{align}\nonumber
    \alpha(v_k - y_k) + (x_k - y_k) = 0 i.e. y_k = \frac{\alpha v_k + x_k}{1+\alpha}
\end{align}

In the above equation $v_k$ is substituted in terms of the manifold $M_0$ coordinates i.e.\\
$$v_{k+1} = x_{k} + \frac{1}{\alpha}(x_{k+1} - x_{k})$$
Since $y_{k+1} = \frac{1}{1+\alpha}[\alpha v_{k+1} + x_{k+1}]$ one has that:
\begin{align}\nonumber
    y_{k+1} = x_{k+1} + \frac{1}{\beta_k} (x_{k+1} -x_k)
\end{align}
  where
  \begin{align}\nonumber
      \beta_k = \frac{\sqrt L - \sqrt \mu}{\sqrt L + \sqrt \mu}
  \end{align}
Finally one has the following accelerated algorithm : 
\begin{align}\label{eq:a315}
    x_{k+1} = y_k - \frac{1}{L}f^{\prime}(y_k)
\end{align}
\begin{align}\nonumber
        y_{k+1} = x_{k+1} - \beta_k(x_{k+1} - x_k) 
\end{align}
where $\beta_k = \frac{\sqrt L - \sqrt \mu}{\sqrt L + \sqrt \mu}$\\
As can be seen, the gradient step (with step size $\frac{1}{L}$) taken by projecting the slow outer layer dynamics, tangent to $M_0$, onto $M_0$ as in (\ref{eq:a315}) and then updating the outer layer coordinate from $y_k$ to $y_{k+1}$ by using the coefficient $\beta_k$ which defines the coupling between the outer layer dynamics and the inner layer dynamics. The next gradient step then proceeds with $\phi_{k+1}$. \\
\textbf{Note}: The above outer, inner layer formation is possible because of the NAIM $M_0$. This structure arises because of the Hessian Term $\frac{1}{L} \nabla^2 f(x)\dot x$ as explained. In the case of the Heavy-Ball algorithm especially for the strongly convex case the absence of the Hessian term implies that the NAIM structure does not exist and the added perturbation makes the invariant manifold $M_0$ structurally unstable.
\section{Discussions}\label{sec:d}
The dynamics of acceleration of any first order algorithm rests upon the process of time scale mixing, achieved by the introduction of a suitable decaying perturbation to a NAIM, encoding a slow-fast dynamical system. The structural stability of the perturbed dynamics follows from Fenichel's theorem. With the addition of the perturbation, the dynamics of the gradient descent on the invariant manifold $M_0$ (coordinates $x_k$) is influenced by the outer (coordinates $y_k$) and inner (coordinates $v_k$) layer dynamics approximating the perturbed manifold, with the outer layer dynamics being tangential to $M_0$. Using ideas discussed in section 4, and the estimation sequence $\phi_k$ to denote the notion of perturbation and invoking the ideas of smoothness and strong convexity associated with the objective function $f(x)$, the accelerated gradient descent algorithm is arrived at as a two step algorithm involving $x_k$ ad $y_k$ only, after the substitution of $v_k$ related terms in terms of $x_k$ and $y_k$, leading to the accelerated algorithms:
\begin{align}\label{eq:d1}
    x_{k+1} = y_k - \frac{1}{L}f^{\prime}(y_k)
\end{align}
\begin{align}\label{eq:d2}
        y_{k+1} = x_{k+1} - \beta_k(x_{k+1} - x_k) 
\end{align}
where $\beta_k = \frac{\sqrt L - \sqrt \mu}{\sqrt L + \sqrt \mu}$
The above besides the parameters $(L and \beta_k)$, is the general structure associated with (any) first order optimization algorithms for a given strongly convex function. The hyper-parameters associated with different variants of (\ref{eq:d1}, \ref{eq:d2}) can be found using automated methods such as Performance Estimation Problem (PEP) framework \cite{colla2023automatic}, or control theoretic methods. The latter provide a powerful lens for analysis and automating parameter selection by treating algorithms as dynamical systems with feedback loops \cite{van2025fastest}. By modelling the optimization process as a closed loop interconnection between a linear time-invariant system (representing the algorithm's updates)and a nonlinear uncertainty (e.g. the gradient oracle satisfying sector bounds or convexity assumptions), these approaches leverage concepts like dissipativity (state-space based) and Integral Quadratic Constraints (IQC) input-output formulation to derive optimal parameters, convergence rates and stability parameters via solvable optimization problems like semi-definite programs (SDPs) or linear matrix inequalities (LMIs). This enables automated tuning without manual guessing, often achieving near-optimal rates for the accelerated algorithms. Control-theoretic methods excel by providing provable guarantees and automation, especially for structural problems.\\
\subsection{The smooth convex case}
If a function $f$ is $\mu$ strongly convex, then the function $x\longmapsto f(x)-\mu/2||x||_2^2$ is convex. Therefore a convex function can be viewed as one that is strongly convex with a parameter $\mu=0$. Thus accelerated dynamics (12) can be adapted to the convex case by studying its behaviour as $\mu \rightarrow 0$. Thus if (12) is written as:
\begin{align} \nonumber
    \ddot x + \sqrt{\frac{1}{L}} \nabla^2 f(x_1)\dot x + 2\sqrt{\mu}\dot x + (1+\sqrt{ \frac{\mu}{L}})\nabla f(x) = 0
\end{align}
For the convex case, one has:
\begin{align}\label{eq:d4} 
    \ddot x + \sqrt{\frac{1}{L}} \nabla^2 f(x_1)\dot x + 2\sqrt{\mu(t)}\dot x + (1+ \frac {\sqrt{\mu(t)}}{L})\nabla f(x) = 0
\end{align}
where $\mu(t) \rightarrow 0$ as $t \rightarrow \infty$. Since $\mu(t)$ denotes a quadratic term which tends to zero as $t \rightarrow \infty$, it can be written as: $\mu (t)=\frac{c^2}{t^2}$, thus (\ref{eq:d4}) becomes:
\begin{align}\label{eq:d5}
    \ddot x + \sqrt{\frac{1}{L}} \nabla^2 f(x_1)\dot x + 2\frac{c}{t} \dot x + (1+\frac{c}{t} \sqrt{\frac{1}{L}})\nabla f(x) = 0
\end{align}
Thus one has the similar NAIM structure as before, but value of $c$ has to be calculated. The coupling coefficient $\beta_k$ varies with each step to account for the time-varying nature of the time scale and $\beta_k \rightarrow 1$ monotonically, i.e.
$$ \beta_k=lim _{\mu \rightarrow 0} \frac{\sqrt L - \sqrt \mu}{\sqrt L + \sqrt \mu}=1$$
The derivation of $\beta_k$ is generally based on Lyapunov based argument which guraantees that the convergence rate for the convex case is $\mathcal{O} (1/k^2)$, boosted from $\mathcal{O} (1/k)$ achieved for the gradient case, i.e.
$f(x)-f(x*)\leq \frac{2L}{(k+1^2)}||x_0-x^*||_{2}^{2}$.
The two step formula for the accelerated algorithm is as before but with added difference that $\frac{1}{L}$ gets replaced by $\beta_k$, where $\beta_k \rightarrow 1$ monotonically.
\subsection{Algorithm for the convex case}
Following the notation discussed in Section \ref{sec:4}:
\begin{equation}\label{eq:d5}
    x_{k+1}=(1-\theta_k)x_k+\theta_k v_k, \theta_k<1, \forall k \in N
\end{equation}
Since $\theta_k <1$, it is convergent. Additionally $\theta_kv_k$ is the driving form and since the perturbation is decaying, $v_k \rightarrow 0$.
$y_k$ is written as a convex combination of $x_k$ and $v_k$.
\begin{equation}\label{eq:d6}
    y_{k}=(1-\theta_k)x_k+\theta_k v_k
\end{equation}
The accelerated algorithm takes the form:
\begin{align}\label{eq:d7}
    x_{k+1} = y_k - t \nabla f(x_k), t=1/L
\end{align}
\begin{align}\label{eq:d8}
        y_{k+1} = x_{k} + \beta_k(x_{k} - x_{k-1}) 
\end{align}
From (\ref{eq:d5})-(\ref{eq:d8}), $\beta_k$ is written in terms of $\theta_k$, i.e. $\beta_k=\theta_k(\theta_{k-1}^{-1}-1)$, and results in:
\begin{align} \nonumber
        y_{k} = x_{k} + \theta_k(\theta_{k-1}^{-1} -1)(x_k-x_{k-1}) 
\end{align}
Note: In the strongly convex case, one has the estimating sequence $\phi_k$ and with the fixed time scale $\sqrt(\frac{\mu}{L})$ and the condition $\phi_k\geq f(x_k), \forall k \in N,$ one could arrive at the optimal value of $\frac{1}{L}$. However, here since $\mu \rightarrow 0$, the above approach has to be modified as briefly explained below.\\
Using smoothness (descent lemma) and convexity of f(x), one arrives at the inequality:
\begin{align}\nonumber
  \frac{1}{\theta_k^2}[f(x_{k+1})-f(x_k)] +\frac{1}{2}||x^*-v_{k+1}||_{2}^2 \leq \frac{(1-\theta_k)}{L\theta_k^2}(f(x_k)-f(x^*))+1/2||x^*-v_{k}||_{2}^2
  \end{align}
 
  If $\frac{(1-\theta_k)}{\theta_k^2}\leq \frac{1}{(\theta_{k-1]^2})}$, the above equation becomes:
\begin{align} \nonumber
  \frac{1}{L\theta_k^2}[f(x_{k+1}-f(x_k^*)] +\frac{1}{2}||x^*-v_{k+1}||_{2}^2\leq \\ \frac{1}{L\theta_{k-1}^2} (f(x_k)-f(x^*))+\frac{1}{2}||x^*-v_{k}||_{2}^2 \nonumber
  \end{align}  
implying that the quantity $v_k=\frac{1}{L \theta_{k-1}^2} ((f(x_k)-f(x^*))+\frac{1}{2}||x^*-v_{k}||_{2}^2)$ is non increasing with k. Thus:
$v_{k+1}\leq v_k\leq v_{k-1}...\leq v_1$, which implies that:
$$ f(x_k)-f^* \leq \frac{L}{2} \theta_{k-1}^{2} (||x^*-x_{0}||_{2}^2) $$ and with $\theta_{k-1}=2/{k+1}$, one obtains the desired rate $\mathcal{O}(\frac{1}{k^2})$, i.e.\\
$f(x_k)-f(x^*)\leq \frac{2L}{(k+1)^2}||x_{0}-x^*||_{2}^2)$.
Thus: $$\beta_k=\theta_k(\theta_{k-1}^{-1}-1)=\frac{2}{k+2}(\frac{k+1}{2}-1)=\frac{k-1}{k+2}, k=1,2..$$ or $\beta_k=\frac{k}{k+3}, k=0,1,..$\\
The Nesterov algorithm for the smooth convex case is of the form:
\begin{align}\label{eq:d11}\nonumber
    x_{k+1} = y_k - \frac{1}{L} \nabla f(x_k), t=1/L
\end{align}
\begin{align}\nonumber
y_{k+1} = x_{k+1} + \frac{k}{k+3}(x_{k} - x_{k-1}), k=0,1,2..
\end{align}
The high resolution ODE of the above algorithm leads to:
\begin{equation}\nonumber
\ddot x+\frac{3}{t}\dot x+\frac{1}{\sqrt{L}}\nabla^2f(x) \dot x+(1+\frac{3}{2 \sqrt L})\nabla f(x)=0
    \end{equation}
Comparing with (\ref{eq:d5}), one has $c=3/2$. The details of the above derivations and its variations can be found in \cite{bubeck2015convex} and the literature dealing with Nesterov's accelerated methods.
\subsection{Triple momentum method and its variations}
By introducing a suitable perturbation dynamics along the manifold $M_0$:\noindent
\begin{align}\nonumber
(\ddot x+\frac{1}{L}\nabla^2 f(x)\dot x+\mu \dot x+ \frac{\mu}{L}\nabla f(x))+(pert)=0\\
(\ddot x+\frac{1}{L} \nabla^2 f(x)\dot x+\mu \dot x + \frac{\mu}{L} \nabla f(x))+ (\gamma \nabla^2 f(x) \dot x+\nabla f(x))=0 \nonumber\\
\implies (\ddot x+(\frac{1}{L}+\gamma) \nabla^2 f(x) \dot x+\mu \dot x + (1+ \frac{\mu }{L}) \nabla f(x))=0 \nonumber
\end{align}
Viewing the above dynamics in the form:
\begin{align}
    \dot x_1=x_2 \nonumber\\
    \dot x_2=u_1+u_2 \nonumber\\
    with\hspace{.5cm} u_1=-\frac{1}{L} \nabla^2 f(x_1)\dot x_1-\mu \dot x_1 - \frac{\mu}{L} \nabla f(x_1) \nonumber\\
    u_2=-\gamma \nabla^2 f(x_1) \dot x_1-\nabla f(x_1) \nonumber
    \end{align}
The above equations represent a linear control system, of the form:
\begin{align}\label{eq:d14}\nonumber
  \begin{bmatrix}
 \dot x_1 \\ \dot x_2 \end{bmatrix} =\begin{bmatrix}   0& 1\\
                 0& 0
\end{bmatrix} \begin{bmatrix}
    x_1\\x_2
\end{bmatrix}+ \begin{bmatrix}
    0\\1 
\end{bmatrix} u_1+ \begin{bmatrix}
    0\\1
\end{bmatrix}u_2
\end{align}
The role of $u_2$ can be understood by substituting $u_1=0$, leading to:
\begin{align}\nonumber
    \dot x_1=x_2\\
    \dot x_2=-\gamma \nabla^2f(x_1)-\nabla f(x_1)  
\end{align}
Analysis shows that the equilibrium condition for the system is:
$\dot x_1=-\frac{1}{\gamma \nabla^2f(x_1)} \nabla f(x_1)$ and this corresponds to the damped Newton equation and it converges to $x_1^*$, scaling to the equilibrium $(x_1^*,0)$ as desired.\\
Note: $(x_1^*,0)$  is the equilibrium with $u_1$ defined and $u_2=0$. Thus $u_1, u_2$ speed the approach of the system dynamics to $(x_1^*,0)$. Since the perturbation is along $M_0$, it speeds up the tangential dynamics of NAIM with reference to the transverse dynamics, leading to the process of time scale mixing. The overall algorithm then takes the form:
\begin{align}\label{eq:d14}\nonumber
    x_{k+1}=x_k+\beta(x_k-x_{k-1})-\alpha \nabla f(y_k)\\
    y_{k+1}=x_{k+1}+\nu (x_{k+1}-x_k)
\end{align}
The additional term in the dynamics of $M_0$ (i.e. $x_k+\beta (x_k-x_{k-1})$) speeds up the iterates of $x_k$ along $M_0$, thus leading to the further speeding of the convergence of $f(x_k)$ to $f(x_k^*)$. The parameters $\beta, \alpha, \nu$ are found by using automated methods as explained earlier.\\
Thus it can be summarized that the dynamics of first order optimization methods is based upon two main constructs: the NAIM structure with the slow-fast dynamics and a suitably defined decaying perturbation, which leads to time scale mixing as discussed. \\
\section{Conclusions}\label{sec:5}

One can postulate that any first order accelerated dynamic flow has two components to it: 
\begin{enumerate}
    \item NAIM with spectral gap between the dynamics of the slow manifold $M_0$ and the faster dynamics transverse to it.
    \item Acceleration is achieved using a suitable decaying perturbation to the above structure so as to obtain time scale mixing, resulting in overall accelerated dynamics.
    \end{enumerate}
On the algorithmic side, this translates to:
\begin{align}\label{eq:c1} 
        x_{k+1}=g_1(x_k,y_k,\nabla f(y_k))
  \end{align}      
  \begin{align}\label{eq:c2} 
      y_{k+1}=g_2(x_{k+1},x_k)
  \end{align}
In Nesterov's case, the R.H.S. of (\ref{eq:c1}) is dependent only on $y_k, \nabla f(y_k)$, and thus one has the form:
\begin{align}
x_{k+1} = y_k - \frac{1}{L} \nabla f(y_k) \nonumber\\ 
 y_{k+1} = x_{k+1} + \beta_k(x_{k+1} - x_k) \nonumber
\end{align}
However, the $f$ dynamics along the manifold $M_0$ can be further speeded up if $g_1()$ is a function of $x_k$ too. This is achieved if the perturbation is introduced along $M_0$. Speeding up the tangential dynamics beyond the rate at which the tangential dynamics approach $M_0$ leads to temporary (decaying perturbation breakdown of the NAIM structure) resulting in time scale mixing. This is possible in strongly convex case as the curvature of $M_0$ is defined by the Hessian term appearing in the continuous time formulation of the accelerated flows and is central to the existence of the NAIM structure (and decoupling of fast and slow dynamics).\\
The process of obtaining a faster time scale for accelerated flows is obtained by the introduction of a suitable (decaying) perturbation, which leads to transient mixing of time scales.  The ensuing dynamics on the invariant manifold $M_0$ is described in terms of its inner and outer layer dynamics, resulting in the defining the two step algorithm which characterizes accelerated gradient algorithms.\\
During the transient phase of mixing of time scales, the monotonic behaviour of the dynamics is lost and oscillatory behaviour emerges. hence, the Nesterov's accelerated optimization methods are not monotonic in nature.\\
Since the Heavy-Ball dynamics does not support a NAIM structure, the invariant manifold on which the gradient flow evolves is not structurally stable and could degenerate into an invariant set on the introduction of perturbation, resulting in the emergence of cycles/ non-convergence of the flow. The Fenichel's theorem which certifies the structural stability of the perturbed manifold is not applicable here owing to the non-existence of the NAIM structure (no pathway), owing to the absence of the connection term defined by the Hessian of $f$. This is especially true for the strongly convex functions.
This work combines the dynamical systems perspective and the $P \&I $ approach to explain the continuous-time ODE representation of the NA. The importance of normal hyperbolicity and reasons why the Heavy-Ball algorithm oscillates has been highlighted. The said approach lays foundations for ideas of time scale mixing, justifying the faster convergence of Nesterov. The mystifying NA and the tedious calculations and the involved variables and functions have been given a physical significance using the proposed approach, finally arriving at the Nesterov's iterative equations from the continuous-time perspective.

\section{Appendix}\label{sec:a}
\section*{A.1: Time Scale and Step size}

To facilitate this correspondence we start by establishing the relevant connections between the gradient descent method 
\begin{align}\label{eq:a11}
    x_{k+1} = x_k -\frac{1}{L} \nabla f(x)
\end{align}
and it’s continuous time counterpart the gradient flow system
\begin{align}\label{eq:a12}
  \dot x(t) = -\nabla f(x) : x \epsilon R^n  
\end{align}
The gradient descent is obtained by sampling the trajectory of gradient flow (\ref{eq:a12}) at times scale $\frac{1}{L}$. The step size $\frac{1}{L}$  in gradient descent directly corresponds to the time scale at which samples are drawn of the trajectories of gradient flow.
A larger step size corresponds to observing the gradient flow at more widely spaced time points  while a smaller step size gives a fine grained approximation of the continuous dynamics.
This connection between step size and time scale gives an insight into many optimization phenomena.\\
If one introduces a weight or a scaling factor $\frac{1}{L}$ into the gradient flow equation (\ref{eq:a12}) i.e. 
\begin{align}\label{eq:a13} 
    \dot x(t) = -\frac{1}{L} \nabla f(x(t))
\end{align}
then this is equivalent to a time scaling that is for the variable $\tau = \frac{1}{L} t$, one has:
\begin{align}\nonumber
    \frac{dx(\tau /\frac{1}{L})}{d\tau} = -\nabla f(x(\frac{\tau}{\frac{1}{L}}))
\end{align}
Generally written as: $\frac{dx}{d\tau} = -\nabla f(x)$ i.e.,
\begin{align}\label{eq:a15} \nonumber
   \frac{dx}{dt} = -\frac{1}{L} \nabla f(x) or \frac{dx}{d\tau} = -\nabla f(x)
\end{align}
This perspective of the equivalence of weighting in gradient flow (\ref{eq:a13}) with the step size in gradient descent (\ref{eq:a11}) established through the notion of time scale helps one to understand the optimization algorithm by unifying them with the concepts from dynamical systems theory.\\
Thus, in general, \textbf{ weight($\frac{1}{L}$) $\equiv$ step size($\frac{1}{L}$)$\equiv$ times scale $(T)$}\\
Thus faster time scale T means a larger step size means a larger weight $\frac{1}{L}$. 
The weighted gradient perspective elegantly connects continuous dynamics with discrete optimization algorithms and highlights the fundamental role of step size as both a time scaling parameter and the gradient weighting factor.\\
\textbf{Step size in Gradient Descent:}\\
For an L-smooth function, the step size in gradient descent is chosen as $\frac{1}{L}=\frac{1}{L}$ to guarantee convergence and ensure that each step makes sufficient progress without an overshoot.
If $\frac{1}{L} \leq \frac{1}{L}$ each update is guaranteed to decrease the function value and the sequence converges. Hence for this reason the optimum choice of the step size is limited to $\frac{1}{L} = \frac{1}{L}$.

\section*{A.2: P\&I approach}
In this section, we address the challenges associated with control design of both linear and nonlinear systems by viewing the system as a dynamical system modeled by an ordinary differential equation and by bringing the ideas from differential geometry to facilitate the control design. The method provides a better insight into the system behavior and a wide spectrum of issues related to design and control can thus be addressed. \\
In developing a control law for a dynamical system a simplification of the analysis and design process can be achieved if the system dynamics can be decomposed into  “fast dynamics” and some “desired slow dynamics” with the fast dynamics being associated directly with the control action. This simplifies the design process, as the control design problem can instead be framed as developing a methodology that guarantees the attractivity of the slow manifold. Though appealing, the main question associated with the above process is that how such a separation of the dynamics for a given system can be achieved.\\
Borrowing ideas from the geometric structures of dynamical systems theory, the control problem is posed in the right mathematical framework. Notions of fiber bundles, importance of the EC and splitting of the tangent space with respect to the EC, the system dynamics and control on fiber bundle structures, are highlighted in the subsequent sections. Having the sufficient background, deriving the control law through the constructive approach of the $P\&I$ method is discussed.
\subsection*{A.2.1. The Geometric Framework for Dynamical Systems}
In geometric control theory, the notion of EC on a vector bundle allows for the decomposition of the vector field into horizontal and vertical spaces, laying the foundation for decoupling the control dynamics from the desired system dynamics. The detailed definitions and preliminaries are discussed below \cite{mehra2025dynamics}.
\subsubsection*{A.2.1.1. Fiber bundle}
A fiber bundle is a quadruple \textbf{(E,$\pi$, B, F)} which consists of a total space \textbf{E}, a  base space \textbf{B}, and 'fibers \textbf{F}' such that \textbf{E} locally resembles the product space \textbf{BxF}.
The product space \textbf{BxF} is called a trivial bundle over the base \textbf{B} with fibers \textbf{F}. If the fiber of the bundle are homeomorphic to a structure group (a vector space), then the bundle is called a principal bundle (vector bundle). Similarly, one can define an affine bundle, sphere bundle, jet bundle etc.
\subsubsection*{A.2.1.2. Ehresmann Connection}
In geometry, connection provides a way to transport local geometric objects like tangent vectors, along curves in a consistent manner, allowing for comparisons of local geometries at different points. In the context of fiber bundles, an Ehresmann Connection (EC) defines a parallel transport, essentially "connecting" or identifying fibers over nearby points, which allows for the concept of parallel transport on the bundle. This provides a way to define the curvature of the fiber bundle.  Specifically, in the case of fiber bundles, an EC tells us how the movement in the total space induces changes along the fibers. 

\textbf{Definition:}
Consider a bundle \textbf{(E,$\pi$, Q, F)} with a projection map $\Pi: E\rightarrow Q$ and let $T_q\Pi$ denote its tangent map at any point 'q'. We call the kernel of $T_q\Pi$ at any point 'q' the vertical space and denote it by $V_q$. An EC, $A$,  is a vector valued one-form on $Q$ that satisfies \cite{bloch2004nonholonomic} :
\begin{enumerate}
    \item $A$ is a vector valued:  $A_q:T_q Q\rightarrow V_q$ is a linear map for each point $q\in Q$
    \item $A$ is a projection $A(v_q)=v_q  \forall \hspace{0.2cm}v_q \in V_q $
\end{enumerate}
If we denote by $H_q$ or $hor_q$ the kernel of $A_q$ and call it the horizontal space, the tangent space to $Q$ is the direct sum of $V_q$ and $H_q$, i.e. we can split the tangent space $Q$ into horizontal and vertical parts, i.e., we can project a tangent vector on to its vertical part using the connection.
\paragraph{Note:} The EC creates a direct sum decomposition of the tangent bundle $TE$ as :
$$TE=H \oplus V $$
\textbf{Connection form:} An EC can be encoded as a connection 1-form $\omega$ on $E$ with values in the vertical bundle $V_E$:\\
$\omega:TE\rightarrow V_E$\\
$\omega$ is a projection\\
$ker(\omega)$=H (horizontal space in the kernel)
The concept of EC happens to be one of the most general definitions of connection in differential geometry. It applies to any smooth fiber bundle making it more versatile than other types of connections that may have other structural requirements. It differs from other types of connections in geometry primarily through its applications to fiber bundles rather than just manifolds. Unlike affine connections, which focus on parallel transport of vectors and require a manifold structure, EC define a direct sum decomposition of the tangent space into a horizontal subspace and a vertical subspace in the total space of the fiber bundle, allowing for a broader framework applicable to various structures. 
\subsubsection*{A.2.1.3. Splitting of the tangent space}
Consider an n-dimensional manifold M with tangent bundle $TM$ such that for an $p\in M,  T_pM$ has the following structure:
$$T_pM=H_p \oplus V_p i.e. H_p \cap V_p=0$$
\paragraph{Case I}
If $M$ is co-ordinatized as $(x,\lambda$) with $x\in \mathcal{R}^k, \lambda \in \mathcal{R}^{n-k}, \hspace{.1cm} k<n $, then $T_pM$ for any $p \in M$ is written as:
$T_pM=(\dot x,0) \oplus (0,\dot \lambda)$
in the Euclidean space with the metric $I_{n \times n}$, i.e.
\begin{equation} \nonumber
\left[ \begin{array}{cc}
\dot x & 0 \\
\end{array} \right]
\left[ \begin{array}{cc}
I_{k \times k} & 0\\
0& I_{n-k \times n-k}
\end{array} \right]
\left[ \begin{array}{cc}
0 \\\dot \lambda \\  
\end{array} \right]=0
\end{equation}

Hence:
$ (\dot x, \dot \lambda)=(\dot x, 0) \oplus (0, \dot \lambda)
=H \oplus V$
\paragraph{Case II}
Now, if instead of the inner product with respect to the metric $I$, we have the metric $R$, which is a symmetric degenerate rank one matrix, called the semi-Riemann metric.\\
$R_{2 \times 2}=$
$\begin{bmatrix}
     m_{11} & m_{12}\\
                 m_{21} & m_{22}
\end{bmatrix}$
, $ m_{ij} \in \mathcal{R}, m_{12}=m_{21}, m_{22}\neq0, |R|\geq0$.\\
We have the following direct sum decomposition (for ($\dot x, \dot \lambda) \in \mathcal{R}^2$)) :
\begin{equation} \nonumber
 (\dot x, \dot \lambda)=(\dot x, -m_{21}m_{22}^{-1}\dot x) \oplus (0, \dot \lambda+m_{21}m_{22}^{-1}\dot x)   
\end{equation}
 
where:
\begin{equation} \nonumber
\left[ \begin{array}{cc}
\dot x & -m_{21}m_{22}^{-1}\dot x \\
\end{array} \right]
\left[ \begin{array}{cc}
m_{11} & m_{12}\\
m_{21} & m_{22}
\end{array} \right]
\left[ \begin{array}{cc}
0 \\\dot \lambda+m_{21}m_{22}^{-1}\dot x \\  
\end{array} \right]=0
\end{equation}
We define $\nabla \Phi=m_{21}/m_{22} $ to be the connection.
\paragraph{Case III:}
A general case:
\begin{equation} \nonumber
R=\left[ \begin{array}{cc}
R_{11} \vline & R_{12}\\
\hline
R_{21} \vline & R_{22}
\end{array} \right]=
\left[ \begin{array}{cc|c}
m_{11} & m_{12} & m_{13}\\
m_{21}& m_{22}  & m_{23}\\
\hline
m_{31}& m_{32}  & m_{33}\\
\end{array} \right]
\end{equation}
where $|M|\geq0, m_{22},m_{33}\neq0$.
Then for $\dot x, \dot y, \dot \lambda \in \mathcal{R}^3$, we have the following direct sum decomposition with respect to $M$.
\begin{equation} \nonumber
 (\dot x, \dot y, \dot \lambda)=(\dot x, -m_{21}m_{22}^{-1}\dot x,- m_{31}m_{33}^{-1}\dot x -m_{32}m_{33}^{-1}\dot y) \\ \oplus
 (0,\dot y+m_{21}m_{22}^{-1}\dot x, \dot \lambda+m_{31}m_{33}^{-1}\dot x+ m_{32}m_{33}^{-1}\dot y)   
\end{equation}
It can be verified that:
\begin{equation} \nonumber
\left[ \begin{array}{cc}
\dot x \\ -m_{21}m_{22}^{-1}\dot x \\ -m_{31}m_{33}^{-1}\dot x-m_{32}m_{33}^{-1}\dot y \\
\end{array} \right]^T
\left[ \begin{array}{cc}
R_{11} & R_{12}\\
R_{21} & R_{22}
\end{array} \right]\\
\left[ \begin{array}{cc}
0 \\\dot \lambda+m_{21}m_{22}^{-1}\dot x \\  
\end{array} \right]=\\ 0
\end{equation}

\subsubsection*{A.2.1.4. Integrability of the connection:}
The EC $ (m_{21}/m_{22}=\nabla \Phi$) is a one form and it is required to be exact (integrable). This ensures that the horizontal spaces (implicit manifold) defined by $ V_H$ are Frobenius integrable, allowing transverse sections of the fiber bundles. This implies that the curvature vanishes identically implying that the connection is flat, allowing for parallel transport along curves without twisting. If a connection is not flat, then one has topological obstructions, non-existence of the implicit manifold in defining continuous feedback control law to stabilize the system. In such a case, one has to resort to discontinuous feedback or time varying control. The smooth flow on an integrable manifold $P $ can lie on an embedded manifold $ M$ of $ P$, if $ M$ is integrable, as non-integrability of $ M$ implies that the tangent spaces of $ M$ do not form a consistent distribution that supports smooth flows globally.\\
\begin{figure}[h!]
    \centering
    \includegraphics[scale=0.5]{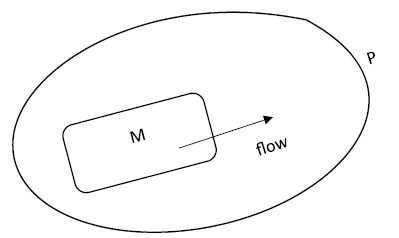}
    \vspace{0.2cm}
    \caption{An integrable submanifold $M$ within a manifold $P$}
    \label{fig:sp}
\end{figure}
\subsubsection*{A.2.1.5. Control System on Fiber bundles}
A control system can be defined in the framework of a fiber bundle as follows:\\
Let $X$ be a smooth manifold representing the state space of the control system, the total space $U$ of the bundle represents pairs of states and control, i.e., $U=X \times Y$, where $Y$ is another manifold representing the control inputs. There exists a projection map $\Pi: U\rightarrow X$, that associates each point in $U$ (a state- control pair) with the corresponding state in $X$.\\
The above structure allows for a local representation of the control system, where each fiber $U_x=\Pi^{-1} (x)$ consists of all possible control inputs that correspond to a particular state $x \in X$.\\
The dynamics of the control system which depends on the current state and the selected control input can be expressed by a smooth map $f: U \rightarrow T_X$ where $T_X$ is the tangent space of X.\\
A feedback law is defined as a continuous section $u:X \rightarrow U$ such that $\Pi \circ u=idx$. This means that for each state of $X$, there is a corresponding control input in $U$, which allows for the closed loop dynamics described by $f(u(x))$ (Figure \ref{fig:m}).
\begin{figure}[h!]
    \centering
    \includegraphics[scale=0.7]{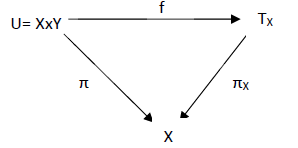}
    \vspace{0.2cm}
    \caption{Fiber bundle structure of a control system}
    \label{fig:m}
\end{figure}
\subsection*{A.2.2. The P\&I method}
Consider the control system
\begin{equation}\label{eq:oe}
\begin{gathered}
    \dot{\mathrm{x}} = f({\mathrm{x},\lambda}): ({\mathrm{x
}} ,\lambda) \in X\\
\dot{\mathrm{\lambda}} = g({\mathrm{x},\lambda})  + u 
\end{gathered}
 \end{equation} 
 where $X=(x,\lambda) \in {\mathcal{R}^{n-1}} \times \mathcal{R}$ is a n-dimensional state space.
based on the desired control objectives, we define the target dynamics: $\lambda=\Phi(x)$, i.e., $\dot x=f(x,\Phi(x))$. This implies that a control $u$ is to be determined such that all the system trajectories reach the target (implicit) manifold:  
\begin{equation} \nonumber
M_0(x,\lambda) = \{{({\mathrm{x}},\lambda) | \lambda + \Phi(x) = 0}\} , (x^*,\lambda^* \in M_0)   
\end{equation}
The steps to generate the control law are summarized as follows :
\begin{enumerate}
    \item Given the implicit manifold $M_0 (x,\lambda)$, the normal to the $M_0$ is given as: 
$$\nabla M_0 = \begin{bmatrix}
\nabla \Phi(x) & 1 
\end{bmatrix}$$
\item Define a degenerate symmetric rank one matrix $ R$ as :
\begin{align}\label{eq:a4}
  R = \nabla M_0^T \nabla M_0= \begin{bmatrix}
\nabla \Phi^T \nabla \Phi & \nabla \Phi \\
\nabla \Phi & 1 
\end{bmatrix}  
\end{align}
$R$ has the form of a semi-Riemannian metric. 
\item Using the fiber bundle structure of the control system and $R$ as defined above, we invoke the idea of EC on the fiber bundle, ($n=2$) i.e.
\begin{align}\label{eq:a5} \nonumber
  R = \nabla M_0^T \nabla M_0= \begin{bmatrix}
\nabla \Phi^T \nabla \Phi & \nabla \Phi \\
\nabla \Phi & 1 
\end{bmatrix} = \left[ \begin{array}{cc}
m_{11} & m_{12}\\
m_{21} & m_{22}
\end{array} \right]
\end{align}
 Then the connection $(\nabla \Phi=m_{21}/m_{22})$ leads to splitting of the tangent bundle $TM$ as described in next step.
 \item If $(\dot x, \dot \lambda ) \in TM$, then one has the direct sum decomposition: 
\begin{equation}\label{eq:a6} \nonumber
 (\dot x, \dot \lambda)=(\dot x, -m_{21}m_{22}^{-1}\dot x) \oplus (0, \dot \lambda+m_{21}m_{22}^{-1}\dot x)   \\
 =H_H+V_V
\end{equation}
where $H_H, V_V$ correspond to the horizontal and vertical space. At each $p\in X$, one has:\\
\begin{align}\nonumber
    T_{p}M = H_{p} \oplus V_{p}, \mid H_{p} \cap V_{p}={\emptyset}
\end{align}
Since $\dot \lambda=u$, it can be inferred from the above splitting that the control $u$ acts along the vertical space $V_V$. 

\begin{figure}[!h]
    \centering
    \begin{subfigure}{0.2\textwidth}
        \centering
        \includegraphics[width=1.1\textwidth]{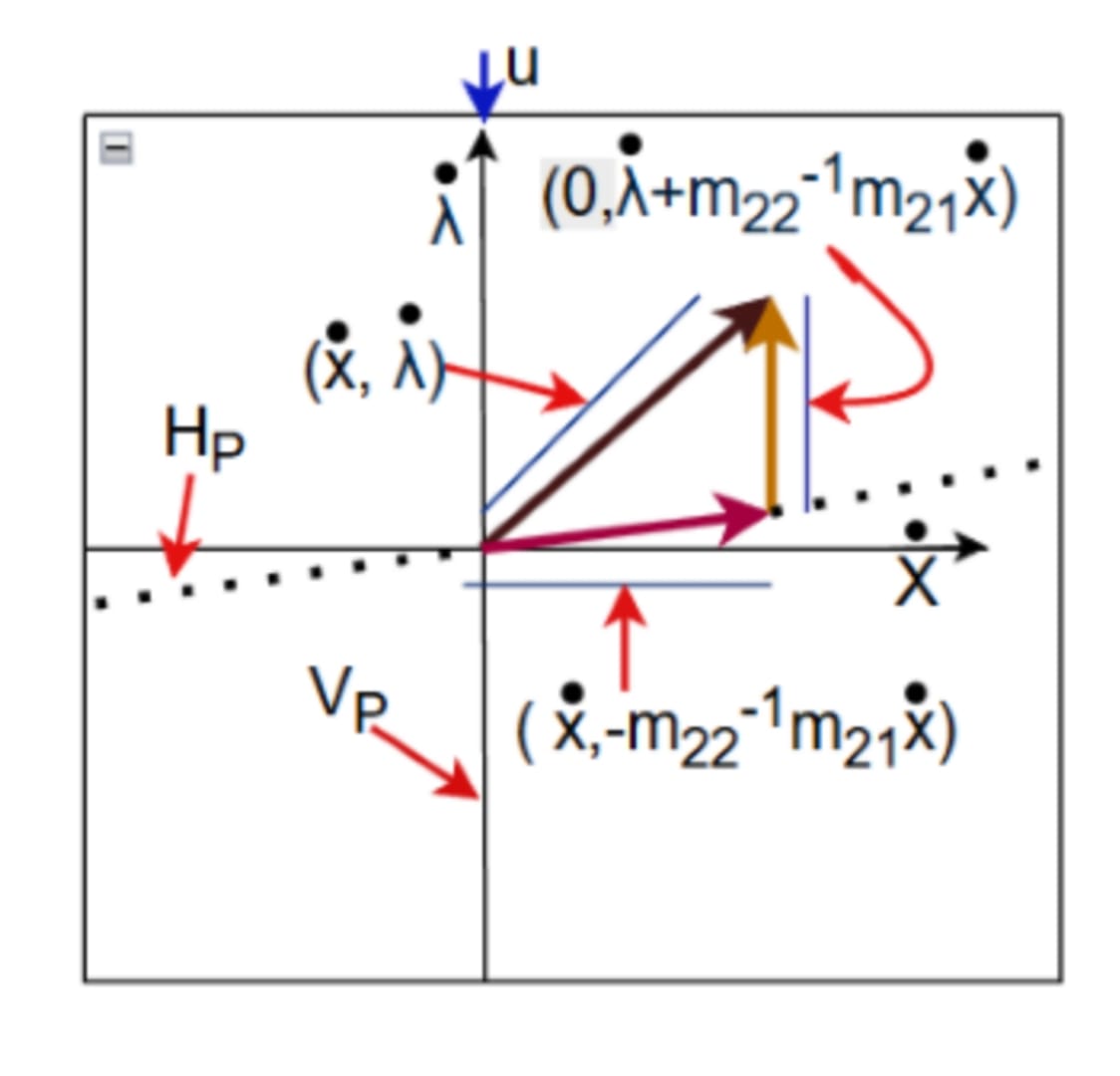}
    \end{subfigure}%
   ~ 
    \begin{subfigure}{0.2\textwidth}
        \centering
        \includegraphics[width=1.105\textwidth]{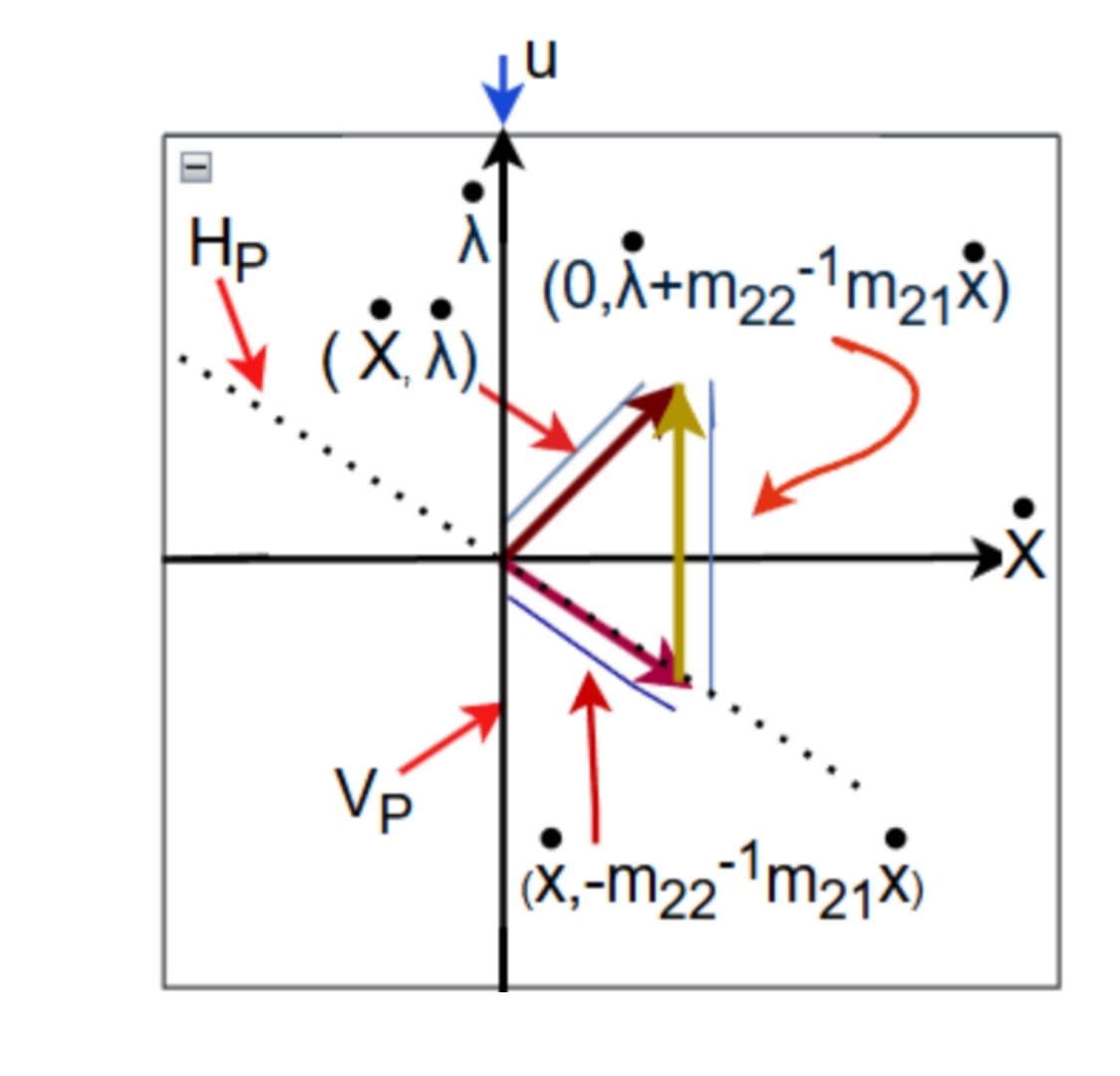}
    \end{subfigure}
    \caption{Geometrical interpretation: vertical vector $V_p$ is along the fiber direction and $H_p \oplus V_p = T_pM$}
    \label{fig:sp}
\end{figure}
\item As $u$ acts along $V_V$, the passive output $y$ is chosen as $y=(\dot \lambda+m_{21}m_{22}^{-1}\dot x)$.
If the passive output is integrable, i.e. $m_{21}m_{22}^{-1}\dot x=\nabla \Phi(x) \dot x$ is integrable,  implying $\Phi(x)$ is an integrable connection \cite{nayyer2022towards}, then one can define a storage function S:
\begin{align}\nonumber
  S=\frac{1}{2}(\int y \hspace{0.2cm}dt)^2=\frac{1}{2}(M^2)  
\end{align}
If $\Phi(x) $ is an integrable connection, where $M=\{{({\mathrm{x}},\lambda) | \lambda + \Phi(x) \neq 0}\}$, i.e. M is defined for $(x,\lambda)$ not lying on $M_0$.

\item The control law $u$ follows from the condition:
$\dot S \leq -\hat \alpha S$, it follows that: \\
$\dot M\leq -\alpha M (\alpha=\hat \alpha/2)$\\
Choosing $u$ to satisfy $\dot M\leq -\alpha M$ gives:
\begin{align}\label{eq:a8}
    u=-\nabla \Phi \dot x-\alpha M
\end{align}
So, the controlled system (\ref{eq:oe}) is of the form:
\begin{align}\nonumber
    \dot{\mathrm{x}} = f({\mathrm{x},\lambda}) 
  \end{align}  
 \begin{align}\label{eq:a10} 
\dot{\mathrm{\lambda}} = -\nabla \Phi \dot x-\alpha M
 \end{align}
Since the control law follows from the fact that $\dot M\leq -\alpha M$, it follows that $M\rightarrow TM_{0(x^*,\lambda^*)}$, where $(x^*,\lambda^*)$ are the equilibrium points belonging to $M_0$.\\

\item An alternative way to look at (\ref{eq:a10}) is as follows:
$$\dot \lambda = -\nabla \Phi\dot x - \alpha M \equiv  \dot \lambda + \nabla \Phi\dot x = -\alpha M$$
i.e. the vector field along which the control law acts is along $V_V$. Thus the controlled trajectories are along $V_V$ and orthogonal to $V_H$. We call this direction along $V_V$ the pathway defined by the connection $\nabla \Phi(x) = \frac{m_{21}}{m_{22}}$ for the off-manifold trajectories of $M_0$ to approach $M_0$ under the control action $u$.\\
\item Finally the control law (\ref{eq:a8}) can be broken down as follows: 
$$u = \underbrace{-\nabla \Phi(x)\dot x}_{pathway} - \alpha M$$
$: \nabla \Phi(x) \dot x$ - defines the pathway for the controlled trajectories which is defined by the connection $\nabla \Phi(x)$ and is along the $V_V$ direction.\\
$: \alpha M$ - this term pushes the off manifold trajectories to $M_0$.\\
\item Summarizing:
\begin{itemize}
    \item $M_0 = \{(x,\lambda) \epsilon X |\lambda + \Phi(x) = 0\}$ - Target manifold based on the desired target dynamics\\
    \item If the connection $\nabla \Phi(x)$ is integrable as discussed, then, $M = \{(x,\lambda) \epsilon X |\lambda + \Phi(x) \neq 0\}$ and the control law is defined as: \\
    $u=-\nabla \Phi(x)\dot x - \alpha M$ and $M\rightarrow TM_0|(x^*,\lambda^*)$
    \item Given the dynamical system (which is stable for $u=0$):
    \begin{align}\nonumber
        \dot x = f(x,\lambda)\\
        \dot \lambda = g(x,\lambda) + u \nonumber
    \end{align}
   The introduction of the pathway term $-\nabla \Phi \dot x$ above gives:
 \begin{align}\nonumber
        \dot x = f(x,\lambda)\\
        \dot \lambda = g(x,\lambda) -\nabla \Phi \dot x \nonumber
    \end{align}
   for some defined $M_0$ and resulting connection ensures that the system trajectories now converge to $(x^*,\lambda ^*)$ but along the pathway defined by the connection \textbf{$\nabla \Phi$}\\
    \item Finally, if $M_0$ is defined as $M_0 = \{(x,\lambda)|\lambda \pm \Phi(x) = 0\}$ then $u = \mp \nabla \Phi(x)\dot x - \alpha M$
\end{itemize}
\end{enumerate}
\textbf{Remarks:}
\begin{enumerate}
    \item The target manifold which contains the invariant set is attractive under the control action of $ u$, i.e. the off-manifold trajectory converges to the manifold exponentially.
    \item If the target manifold is itself is not invariant, i.e. the target dynamics are not tangential to it, then one can think of the invariant set contained in $M$ as a controlled normally hyperbolic manifold which is perturbed by the target dynamics which converges to it exponentially. The stability of the perturbed invariant manifold follows from the application of the Fenichel's theorem, which provides a rigorous framework for analyzing how invariant manifolds persist and behave under perturbations.
    \item The target manifold helps to set up the geometry from which the idea of NHIM and associated ideas follow.
    \end{enumerate}

\textbf{Definition:} Let $M$ be a smooth compact sub-manifold (implicit manifold M) embedded in $ R^n$, with a dynamical system described by the flow $f: X\rightarrow X$. The manifold M is normally hyperbolic at a point $ p\in M$ if the tangent space of the ambient manifold at $ p\in M$ denoted by $ T_pM$ admits a splitting that is invariant  differential of the dynamics.\\
Specifically:
\begin{enumerate}
    \item Tangent Bundle splitting \cite{kuehn2015multiple}: $$ T_pX=T_pM \oplus E_p^s\oplus E_p^u$$
where $E_p^s:$ denotes a stable bundle, corresponding to directions that contract under the dynamics (stable directions) \\
$E_p^u:$ denotes the unstable bundle corresponding to directions that expand under the dynamics.
    \item Invariance: The splitting is invariant under the differential :
    $ Df_p:T_pX\rightarrow T_{f(p)} X$
    \item Hyperbolicity condition: The dynamics in the normal directions($E_p^s,E_p^u $) dominate the dynamics in the tangential directions.
    \item Global Definition: The manifold M is normally hyperbolic if the above splitting and (other conditions) hyperbolicity holds for any $p \in M$. The tangent bundle $TM,E_p^s,E_p^u $ together form a Whitney sum over M.\\
    $ T_XM=TM \oplus E^S \oplus E^u$.
\\
\textbf{Note}
\begin{itemize} 
\item Using the EC and the implicit manifold M, one has:
$ TX= TM \oplus E^s=V_H\oplus V_V$
where the dynamics $ V_V$ are contracting under the control action of $u$ as explained in the P\&I method, i.e. under the action of $ u$, one has a controlled normally hyperbolic manifold.
\item Considering the target dynamics, the above statement holds true if the target dynamics lies tangent to the manifold $ M $ and then one has Normally Hyperbolic Invariant Manifold $M$. However, when the above does not hold and the target dynamics is asymptotically stable, then the manifold $M$ qualifies as a normally hyperbolic manifold containing the invariant set that is the equilibrium point which is viewed as being perturbed by the target dynamics.
\end{itemize} 
\item Fenichel's theorem \cite{wiggins2013normally}: (Fast-slow version): The essence of the theorem is as follows: \\
Suppose $S$ is a compact normally hyperbolic sub manifold (equilibrium point) of the implicit manifold M, then  for $\epsilon>0$ sufficiently small, there exists a locally invariant manifold $S_\epsilon$ diffeomorphic to S. Local invariance implies that the trajectories can enter or leave $ S_\epsilon$ only through its boundaries. \\
Note: If the implicit manifold is invariant, then by the application of the above theorem, it implies that $M$ is perturbed to a locally invariant manifold $M_\epsilon$ diffeomorphic to M.

\end{enumerate}

\printbibliography

@article{nayyer2022towards,
  title={Towards a constructive framework for stabilization and control of nonlinear systems: Passivity and immersion (P$\backslash$\&I) approach},
  author={Nayyer, Syed Shadab and Wagh, Sushama R and Singh, Navdeep M},
  journal={arXiv preprint arXiv:2208.10539},
  year={2022}
}

@article{fenichel1979geometric,
  title={Geometric singular perturbation theory for ordinary differential equations},
  author={Fenichel, Neil},
  journal={Journal of differential equations},
  volume={31},
  number={1},
  pages={53--98},
  year={1979},
  publisher={Academic Press}
}

@article{eilertsen2020quasi,
  title={The quasi-steady-state approximations revisited: Timescales, small parameters, singularities, and normal forms in enzyme kinetics},
  author={Eilertsen, Justin and Schnell, Santiago},
  journal={Mathematical biosciences},
  volume={325},
  pages={108339},
  year={2020},
  publisher={Elsevier}
}

@article{bubeck2015convex,
  title={Convex optimization: Algorithms and complexity},
  author={Bubeck, S{\'e}bastien and others},
  journal={Foundations and Trends{\textregistered} in Machine Learning},
  volume={8},
  number={3-4},
  pages={231--357},
  year={2015},
  publisher={Now Publishers, Inc.}
}

@article{bloch2004nonholonomic,
  title={Nonholonomic mechanics and control},
  author={Bloch, Anthony M and Brogliato, B},
  journal={Appl. Mech. Rev.},
  volume={57},
  number={1},
  pages={B3--B3},
  year={2004}
}

@article{van2025fastest,
  title={The Fastest Known First-Order Method for Minimizing Twice Continuously Differentiable Smooth Strongly Convex Functions},
  author={Van Scoy, Bryan and Lessard, Laurent},
  journal={IEEE Control Systems Letters},
  year={2025},
  publisher={IEEE}
}

@article{goujaud2023provable,
  title={Provable non-accelerations of the heavy-ball method},
  author={Goujaud, Baptiste and Taylor, Adrien and Dieuleveut, Aymeric},
  journal={arXiv preprint arXiv:2307.11291},
  year={2023}
}

@article{alvarez2002second,
  title={A second-order gradient-like dissipative dynamical system with hessian-driven damping.: Application to optimization and mechanics},
  author={Alvarez, Felipe and Attouch, Hedy and Bolte, J{\'e}r{\^o}me and Redont, Patrick},
  journal={Journal de math{\'e}matiques pures et appliqu{\'e}es},
  volume={81},
  number={8},
  pages={747--779},
  year={2002},
  publisher={Elsevier}
}

@article{ross2022derivation,
  title={A Derivation of Nesterov's Accelerated Gradient Algorithm from Optimal Control Theory},
  author={Ross, I Michael},
  journal={arXiv preprint arXiv:2203.17226},
  year={2022}
}

@article{goujaud2025open,
  title={Open Problem: Two Riddles in Heavy-Ball Dynamics},
  author={Goujaud, Baptiste and Taylor, Adrien and Dieuleveut, Aymeric},
  journal={arXiv preprint arXiv:2502.19916},
  year={2025}
}

@book{wiggins2013normally,
  title={Normally hyperbolic invariant manifolds in dynamical systems},
  author={Wiggins, Stephen},
  volume={105},
  year={2013},
  publisher={Springer Science \& Business Media}
}

@article{drusvyatskiy2018optimal,
  title={An optimal first order method based on optimal quadratic averaging},
  author={Drusvyatskiy, Dmitriy and Fazel, Maryam and Roy, Scott},
  journal={SIAM Journal on Optimization},
  volume={28},
  number={1},
  pages={251--271},
  year={2018},
  publisher={SIAM}
}

@article{drori2014performance,
  title={Performance of first-order methods for smooth convex minimization: a novel approach},
  author={Drori, Yoel and Teboulle, Marc},
  journal={Mathematical Programming},
  volume={145},
  number={1},
  pages={451--482},
  year={2014},
  publisher={Springer}
}

@book{nesterov2013introductory,
  title={Introductory lectures on convex optimization: A basic course},
  author={Nesterov, Yurii},
  volume={87},
  year={2013},
  publisher={Springer Science \& Business Media}
}

@article{wilson2021lyapunov,
  title={A Lyapunov analysis of accelerated methods in optimization},
  author={Wilson, Ashia C and Recht, Ben and Jordan, Michael I},
  journal={Journal of Machine Learning Research},
  volume={22},
  number={113},
  pages={1--34},
  year={2021}
}

@article{krichene2015accelerated,
  title={Accelerated mirror descent in continuous and discrete time},
  author={Krichene, Walid and Bayen, Alexandre and Bartlett, Peter L},
  journal={Advances in neural information processing systems},
  volume={28},
  year={2015}
}

@article{lessard2016analysis,
  title={Analysis and design of optimization algorithms via integral quadratic constraints},
  author={Lessard, Laurent and Recht, Benjamin and Packard, Andrew},
  journal={SIAM Journal on Optimization},
  volume={26},
  number={1},
  pages={57--95},
  year={2016},
  publisher={SIAM}
}

@article{polyak1964some,
  title={Some methods of speeding up the convergence of iteration methods},
  author={Polyak, Boris T},
  journal={Ussr computational mathematics and mathematical physics},
  volume={4},
  number={5},
  pages={1--17},
  year={1964},
  publisher={Elsevier}
}

@article{wibisono2016variational,
  title={A variational perspective on accelerated methods in optimization},
  author={Wibisono, Andre and Wilson, Ashia C and Jordan, Michael I},
  journal={proceedings of the National Academy of Sciences},
  volume={113},
  number={47},
  pages={E7351--E7358},
  year={2016},
  publisher={National Academy of Sciences}
}

@article{shi2022understanding,
  title={Understanding the acceleration phenomenon via high-resolution differential equations},
  author={Shi, Bin and Du, Simon S and Jordan, Michael I and Su, Weijie J},
  journal={Mathematical Programming},
  pages={1--70},
  year={2022},
  publisher={Springer}
}

@article{su2016differential,
  title={A differential equation for modeling Nesterov's accelerated gradient method: Theory and insights},
  author={Su, Weijie and Boyd, Stephen and Candes, Emmanuel J},
  journal={Journal of Machine Learning Research},
  volume={17},
  number={153},
  pages={1--43},
  year={2016}
}

@article{allen2014novel,
  title={A novel, simple interpretation of nesterov’s accelerated method as a combination of gradient and mirror descent},
  author={Allen-Zhu, Zeyuan and Orecchia, Lorenzo},
  journal={arXiv preprint arXiv:1407.1537},
  year={2014}
}

@article{colla2023automatic,
  title={Automatic performance estimation for decentralized optimization},
  author={Colla, Sebastien and Hendrickx, Julien M},
  journal={IEEE Transactions on Automatic Control},
  volume={68},
  number={12},
  pages={7136--7150},
  year={2023},
  publisher={IEEE}
}

@article{gunjal2023new,
  title={A new perspective of accelerated gradient methods: The controlled invariant manifold approach},
  author={Gunjal, Revati and Wagh, Sushama and Nayyer, Syed Shadab and Stankovic, Alex and Singh, Navdeep M},
  journal={arXiv preprint arXiv:2305.10756},
  year={2023}
}

@article{mehra2025dynamics,
  title={On the Dynamics of Control},
  author={Mehra, Rachit and Parimi, M and Wagh, SR and Singh, Navdeep M},
  journal={arXiv preprint arXiv:2507.04181},
  year={2025}
}

@article{bubeck2015geometric,
  title={A geometric alternative to Nesterov's accelerated gradient descent},
  author={Bubeck, S{\'e}bastien and Lee, Yin Tat and Singh, Mohit},
  journal={arXiv preprint arXiv:1506.08187},
  year={2015}
}

@book{kuehn2015multiple,
  title={Multiple time scale dynamics},
  author={Kuehn, Christian and others},
  volume={191},
  year={2015},
  publisher={Springer}
}
\end{document}